\newcommand*\bigcdot{\mathpalette\bigcdot@{1.5}}
\newcommand*\bigcdot@[2]{\mathbin{\vcenter{\hbox{\scalebox{#2}{$\m@th#1\bullet$}}}}}
\newcommand*\smallcdot{\mathpalette\bigcdot@{.5}}
\newcommand*\smallcdot@[2]{\mathbin{\vcenter{\hbox{\scalebox{#2}{$\m@th#1\bullet$}}}}}
\theoremstyle{definition}
\newtheorem{assumption}{Assumption}
\theoremstyle{plain}
\newtheorem{theorem}{Theorem}
\newtheorem{lemma}{Lemma}
\newtheorem{proposition}{Proposition}
\newtheorem{corollary}{Corollary}
\theoremstyle{remark}
\newtheorem{remark}{Remark}
\theoremstyle{definition}
\def\N{\mathbb{N}}
\def\p{\mathbb{P}}
\def\E{\mathbb{E}}
\def\R{\mathbb{R}}
\def\M{\mathcal{M}}
\def\H{\mathcal{H}}
\def\X{\mathcal{X}}
\def\X{\mathcal{X}}
\def\d{\,\mathrm{d}}
\DeclareMathOperator*{\esssup}{ess\text{-}sup}
\DeclareMathOperator*{\essinf}{ess\text{-}inf}
\renewcommand{\[}{\left[}
\def\id{\mathds{1}}
\title{Probabilistic risk aversion for generalized rank-dependent functions}
\author{Ruodu Wang\thanks{Department of Statistics and Actuarial Science, University of Waterloo,  Canada. \Letter~\url{wang@uwaterloo.ca}}
\and 
Qinyu Wu\thanks{Corresponding author. Department of Statistics and Actuarial Science, University of Waterloo,  Canada. \Letter~\url{q35wu@uwaterloo.ca}}
%Department of Statistics and Finance,
%University of Science and Technology of China, China. \Letter~\url{wu051555@mail.ustc.edu.cn}}
}
\date{\today}
\begin{document}
	\maketitle
	\begin{abstract}

Probabilistic risk aversion, defined through 
quasi-convexity in probabilistic mixtures, is a common  useful property in decision analysis. We study a general class of non-monotone mappings, called the generalized rank-dependent functions, which includes the preference models of expected utilities, dual utilities, and rank-dependent utilities as special cases, as well as signed Choquet functions used in risk management. 
Our results fully characterize probabilistic risk aversion for generalized rank-dependent functions: This property is determined by the distortion function, which is precisely one of the two cases: those that are convex and those that  correspond to scaled quantile-spread mixtures. Our result also leads to seven equivalent conditions for quasi-convexity in probabilistic mixtures of dual utilities and signed Choquet functions.
As a consequence, although probabilistic risk aversion is   quite different from the classic notion of strong risk aversion for generalized rank-dependent functions,
these two notions coincide for dual utilities under an additional continuity assumption. 

%We also illustrate a conflict between convexity in mixtures and convexity in risk pooling among constant-additive mappings.

\textbf{Keywords}:   quasi-convexity; risk aversion, signed Choquet functions; rank-dependent utilities; probabilistic mixtures
	\end{abstract}

 \noindent\rule{\textwidth}{0.5pt}

\section{Introduction}

Expected utility theory (\cite{vNM47}), dual utility theory (\cite{Y87}),  and rank-dependent utility theory (\cite{Q82})  are among the most popular probabilistic preference models, and they are closely related to several large classes of  law-based risk measures (\cite{MFE15,FS16}).  

These decision models and risk measures can be equivalently formulated on either a set of distributions   or a set of random variables. 
A popular operation on a set of distributions is a probabilistic mixture; for instance, the independence axiom of \cite{vNM47} is formulated using probabilistic mixtures.  Quasi-convexity in probabilistic mixtures is a  useful property in decision models and risk measures, and it means that 
if the first distribution is preferred over the second one, then the first distribution is also preferred to a mixture of the two. 
The property is called \emph{probabilistic risk aversion} by \cite{W94} to distinguish it from other notions of risk aversion such as weak risk aversion or strong risk aversion in the sense of \cite{RS70}.\footnote{Throughout the paper, we will mostly say ``quasi-convexity" instead of ``probabilistic risk aversion" to emphasize the mathematical essence of this property and to contrast it with other properties.}
In the context of optimal decision under ambiguity,  e.g., \cite{GS89}, this property is convenient for applying minimax theorems,   allowing us to exchange the order of 
a maximum (representing an optimal action) and an infimum (representing a worst-case probability measure) under mild conditions. 
The property of having both quasi-convexity and quasi-concavity   is called betweenness (\cite{D86}), which was introduced to weaken the independence axiom;  see \cite{W94} for the importance of these properties in decision theory. 
In mathematical finance, betweenness for risk measures corresponds to the property of convex level sets, as studied by  \cite{W06}, \cite{Z16} and \cite{WW20}.

It is well known that the expected utility model is linear with respect to probabilistic mixtures, thus both convex and concave (we omit ``in probabilistic mixtures" below unless it is contrasted to another sense of convexity). 
Whereas convexity is well understood for dual utilities and rank-dependent utility models (e.g., \cite{W94}), quasi-convexity is not completely characterized for these models, and the result remains unknown even if under an increasing monotonicity (in the weak sense) or continuity condition. Although being weaker, quasi-convexity is similar to convexity, and 
as far as we know, the equivalence results between quasi-convexity and convexity in the literature are all under a strict monotonicity condition (see \cite{W94}; \cite{WY21}). 
%\cite{W94} showed that 
%they are equivalent for dual utilities, and \cite{WY21} generalized the equivalence result on the space which contains at least three nonindifferent outcomes.  
Nevertheless, if we remove this strict monotonicity, there are commonly used functionals in decision theory, statistics and risk management, such as left and right quantiles  which are quasi-convex but not convex. 

The main aim of this paper is to understand  quasi-convexity for a large class of mappings, called the \emph{generalized rank-dependent functions}, which include dual utilities and rank-dependent utilities as special cases.  
Our main result is a full characterization of quasi-convexity for this class with a very weak assumption on the domain of the functions, that is, the domain contains all distributions supported on
at least three nonindifferent outcomes.
This characterization is built on a corresponding result on \emph{signed Choquet functions}, a class of non-monotone and law-based mappings studied recently by \cite{WWW20a,WWW20}, and the corresponding mappings without law-basedness were investigated earlier in \cite{S86}.

Although most preference models in decision theory are monotone (either with respect to some notions of stochastic dominance or other orders),
there are three main advantages of working with non-monotone mappings, justifying the relevance of the study in this paper.    
First, signed Choquet functions include many popular non-monotone objects in risk management, such as the mean-median deviation, the Gini deviation,   the inter-quantile range, and the inter-Expected Shortfall range; see the examples in \cite{WWW20}. Note that variability measures in the sense of \cite{FWZ17} are never monotone with respect to first-order stochastic dominance.  
Second, removing monotonicity from the analysis allows us to have a deeper understanding of the essence of   important properties, such as quasi-convexity, by disentangling monotonicity from them. 
The third advantage concerns technical convenience and unification. With monotonicity relaxed, results on convexity and concavity, or those on maxima and minima, are symmetric; we only need to analyze one of them, and the other is clear automatically. This is particularly helpful when we switch between the world of risk measures (a smaller value is preferred) and that of utilities (a larger value is preferred).  
That being said, it is not our intention to argue against   monotonicity  in decision making; opening up the discussions on non-monotone mappings indeed helps to better understand monotone ones.  %All results in this paper are new with or without the additional assumption of monotonicity.  

%To fully characterize quasi-convexity for generalized rank-dependent functions, 
We begin by collecting  definitions and some preliminaries in Section \ref{sec:2}.
 In Section \ref{sec:RDU}, we focus on two important models in decision theory, dual utilities and rank-dependent utilities, by presenting a full characterization of their quasi-convexity (Theorem \ref{th-mainRDU}).
 This result implies, in particular, that for a dual utility with a continuous distortion function, strong risk aversion in the sense of \cite{RS70} is equivalent to probabilistic risk aversion. 
  We discover a new risk functional, called \emph{min-quantile mixture}, as the only possible form of dual utilities which are quasi-convex, other than the ones with convex distortion functions. To highlight the class of min-quantile mixtures, we use some properties to pin down it (Proposition \ref{th:mqm}).
Our main technical result (Theorem \ref{th-main})  in Section \ref{sec:mqc}   establishes a characterization of all quasi-convex generalized rank-dependent functions,   more general than those treated in Theorem \ref{th-mainRDU}.  
The characterization only depends on the distortion functions.
The class turns out to contain slightly more than those with convex distortion functions:  
A signed Choquet function is quasi-convex in probabilistic mixtures if and only if it is either convex in probabilistic mixtures or it is a \emph{scaled quantile-spread mixture} (more general than min-quantile mixtures). Based on our main result, a unifying equivalence result on signed Choquet functions (Theorem \ref{th-grand}) is presented:
%(without a proof):
If a distortion function is continuous, then quasi-convexity  is equivalent to six other equivalent conditions.
In Section \ref{sec:proof}, we give   proofs of our main result which relies on delicate technical analysis and use some results of \cite{DK82}, \cite{W94}, \cite{WWW20} and \cite{WW20}.  The technical challenges may explain why the result was not available before, given the prominence of both concepts of quasi-convexity and rank-dependent utilities. Some implications of Theorem \ref{th-main} for quasi-concavity and quasi-linearity are also reported in this section. In Section \ref{app:A}, we present a conflict between convexity in probabilistic mixtures and convexity in risk pooling among the class of constant-additive mappings. 
Section \ref{sec:7} concludes the paper.

\section{Preliminaries}
\label{sec:2}

In this section, we present some background on convexity, quasi-convexity, rank-dependent utilities,  and generalized rank-dependent functions.

\subsection{Convexity and  quasi-convexity}

%We first explain the key concepts in the paper, concavity and  quasi-concavity in probabilistic mixtures.
In this paper, the term ``distribution" represents a probability measure over a set of outcomes which is the real line $\R$. 
Let $\mathcal M$ be a set of distributions, and we always assume that it is  convex throughout the paper.
%Let $(\Omega,\mathcal F,\p)$ be a probability space. In this paper, distributions are always represented by the cumulative distribution functions of random variables on $(\Omega,\mathcal F,\p)$, i.e, for a random variable $X$, the distribution of $X$ is 
%Let $\mathcal M$ be a set of distributions, and we assume that it is always convex throughout the paper.
A mapping $\rho:\M\to \R$ is \emph{p-convex}  if 
$$
\rho(\lambda F+(1-\lambda)G) \le \lambda \rho(F) +(1-\lambda) \rho(G) \mbox{~~~for all $F,  G\in \M$ and $\lambda \in [0,1]$},
$$
and it is \emph{p-quasi-convex} if 
$$
\rho(\lambda F+(1-\lambda)  G) \le \max\{\rho(F), \rho(G) \} \mbox{~~~for all $F,G\in \M$ and $\lambda \in [0,1]$.}
$$ 
As usual, p-concavity and p-quasi-concavity  
are defined
by using
$\rho(\lambda F+(1-\lambda)G) \ge \lambda \rho(F) +(1-\lambda) \rho(G) $
and $\rho(\lambda F+(1-\lambda)  G) \ge \min\{\rho(F), \rho(G) \}$, respectively,  in the formulation above.
 \emph{P-quasi-linearity} of a functional $\rho$ means that it is both p-quasi-convex and p-quasi-concave.
The reason that we emphasize ``p"  (which stands for ``probabilistic") for these properties will be explained soon, as another form of convexity and concavity will appear and be contrasted.

Quasi-convexity is an ordinal property, whereas convexity is not. Indeed, 
for a preference $\succeq$ on $\M$ numerically represented by $\rho$, i.e., $F\succeq G \Longleftrightarrow \rho(F)\ge \rho(G)$,
quasi-convexity of $\rho$ corresponds to the following  property of $\succeq$ (see e.g., \cite{WY19, WY21}),
$$
\Big (F  \succeq  G  ~\Longrightarrow~  F \succeq  \lambda F +(1-\lambda)G \Big)\mbox{~~~for all $F,G\in \M$ and $\lambda \in [0,1]$.}
$$
This property is known as  {probabilistic risk aversion} by \cite{W94}.
Intuitively, it means that the decision maker with preference $\succeq$
dislikes combining two equally favourable distributions via a  random draw, 
which generally induces additional randomness.

Convexity or concavity of a mapping $\rho$  is commonly used in risk management, where the value of $\rho$, typically representing a monetary value, is primitive; see e.g., \cite{FS16}.   Quasi-convexity or quasi-concavity of $\rho$ is commonly used in decision theory, where the preference relation $\succeq$ is the primitive; see e.g., \cite{W10} and \cite{CMMM11}.
To unify both literature, we will formulate all properties on the numerical representation $\rho$.

Let $\X$ be a set of random variables in a fixed probability space $(\Omega,\mathcal F,\p)$ such that, first, $\X$ is \emph{law-based}, that is, if $X\in\X$ and $Y$ has the same distribution of $X$, then $Y\in\X$, and second, $\mathcal M=\{\p\circ X^{-1}: X\in\X\}$, where $X^{-1}$ is the set-valued inverse of $X$. That is, the set of distributions of all random variables in $\X$ is exactly $\M$. 
To guarantee the existence of $\X$ for all $\M$, we assume that
the probability space is nonatomic.\footnote{A probability space $(\Omega,\mathcal F,\p)$ is \emph{nonatomic} if for each $A\in\mathcal F$
	with $\p\left(  A\right)  >0$ there exists $B\in\mathcal F$ contained in $A$ such
	that $0<\p\left(  B\right)  <\p\left(  A\right) $.}
%Such set $\mathcal X$ always exists if the probability space is nonatomic.
%Let $\M_c$ be the set of compactly supported distributions on $\R$.
A mapping $\rho$ from $\M$ to $\R$ can be equivalently formulated as a mapping $\rho$ from $\X$ to $\R$ via $\rho(X):=\rho(F)$ where $F$ is the distribution of $X$; here we slightly abuse the notation to use   $\rho$ to represent both, and this should   be clear from the context. Such a mapping  $\rho$ on $\X$ is \emph{law-based}; that is, if $X,Y$ have the same distribution, then $\rho(X)=\rho(Y)$.
%where $\laweq$ represents equality in distribution.\footnote{In decision theory, this property is called probabilistic sophistication by \cite{MS92}.}   
We need both versions of the same mapping to make some interesting contrasts. 
When $\mathcal X$ is a convex set,
a mapping $\rho$ is \emph{o-convex} (where ``o" stands for ``outcome") on $\X$ if
\begin{align*}
	\rho(\lambda X+(1-\lambda)Y) \le \lambda \rho(X) +(1-\lambda) \rho(Y) \mbox{~~~for all $X,  Y\in \mathcal X$ and $\lambda \in [0,1]$},
\end{align*}
and it is \emph{o-quasi-convex} if 
\begin{align*}
	\rho(\lambda X+(1-\lambda) Y) \le \max\{\rho(X), \rho(Y) \} \mbox{~~~for all $X,Y\in \mathcal X$ and $\lambda \in [0,1]$}.
\end{align*}
O-concavity and o-quasi-concavity are defined similarly.  These properties are common for risk measures (\cite{ADEH99,FS16}). Moreover, \cite{DBW24} extended the study of o-quasi-convexity to the space of sequences of bounded random variables, further illustrating the relevance of these concepts in various frameworks.

For the same mapping $\rho$,  o-convexity 
and p-convexity have different interpretations, and sometimes they conflict with each other.
For instance, the variance is \emph{o-convex and  p-concave}.
This conflict is indeed intuitive, because a mixture of  two random losses, representing diversification,  reduces variability, whereas a mixture of two  distributions, representing   throwing a die to determine between two models, increases variability.\footnote{In the risk measure literature, o-quasi-convexity on $\X$ is  argued by \cite{CMMM11} to better represent the consideration of diversification.}
A mapping may be both o-convex and p-convex, and an example is the expected utility, $F\mapsto  \int u \d F$ for a convex function $u$; this mapping   is indeed o-convex and p-linear (i.e., both p-convex and p-concave). Nevertheless, Proposition \ref{pr:conflict} in Section \ref{app:A} shows a conflict between o-convexity and p-convexity; that is, among continuous and constant-additive mappings (e.g., monetary risk measures of \cite{FS16}), only a scaled expected value satisfies both properties.

We collect some notation used later in the paper. Denote by $\M_c$ the set of all compactly supported distributions on $\R$, and $\X_c$, which is defined on a nonatomic probability space, is the set of all random variables having distributions in $\M_c$. Note that both $\M_c$ and $\X_c$ are convex. We use $\esssup X$ and $\essinf X$ to represent the essential supremum and the essential infimum of a random variable $X$ on a probability space, respectively.
Denote by $\delta_x$ the point-mass at $x\in\R$. The function $\id_{A}$ is the indicator function of an event $A$.  Throughout, terms like ``increasing" and ``decreasing" are in the non-strict sense. All real-valued functions are tacitly assumed to be  measurable. We say that a real-valued function is (strictly) monotone if it is (strictly) increasing or (strictly) decreasing. A functional $\rho:\mathcal M\to\R$ is monotone if $\rho(F)\le \rho(G)$ for all $F,G\in\mathcal M$ such that $F\le_{\rm FSD} G$ where $\le_{\rm FSD}$ represents the first-order stochastic dominance, i.e., $F\le_{\rm FSD} G$ means that $\int f\d F\le \int f\d G$ for all increasing $f:\R\to\R$.

\subsection{Generalized rank-dependent functions}
\label{sec:22}
%%Let 
% $\M\subseteq\mathcal M_c$ be a set of distributions on $\R$.
 We first formulate signed Choquet functions and dual utilities, and then  introduce rank-dependent utilities and generalized rank-dependent functions. 
%We assume $\M\subseteq \M_c$ which ensures finiteness of these mappings. \label{added explanation.}
Signed Choquet functions are law-based mappings which are additive for comonotonic random variables (Theorem 1  of \cite{WWW20}, based on Proposition 2 of \cite{S86}), but not necessarily monotone.  
Denote  the set of all \emph{distortion functions} by $ \H ^{\rm BV}$,  $$\H^{\rm BV}=\{h:  [0,1]\to\R \mid \mbox{$h$ is of bounded variation and }h(0)=0 \},$$
 and  its subset of increasing normalized distortion functions by $ \H^{\rm DT}$,\footnote{DT stands for the dual theory of choice under risk of \cite{Y87}.}
 $$\H^{\rm DT}=\{h:  [0,1]\to\R  \mid \mbox{$h$ is increasing, $h(0)=0$ and $h(1)=1$} \}.$$   
% Throughout, terms like ``increasing" and ``decreasing" are in the non-strict sense.
A \emph{signed Choquet function} $I_h: \M \rightarrow \R$ is defined as  
\begin{align}
I_h(F)=\int_0^\infty h\circ F((x,\infty))\d x + \int_{-\infty}^{0} (h\circ F((x,\infty))-h(1) )\d x,
\label{eq:1}
\end{align}
where $h\in \H^{\rm BV}$ is its distortion function.  If $h\in \H^{\rm DT}$, then $I_h$ is called a \emph{dual utility} of \cite{Y87}. 
As explained earlier, $I_h$ is also formulated on $\X$, the set of random variables which have distributions in $\mathcal M$, via 
\begin{align}
I_h(X)=\int_0^\infty h(\p(X>x))\d x + \int_{-\infty}^{0} (h(\p(X>x))-h(1) )\d x.
\label{eq:1-prime}
\end{align}  
%The mapping $I_h$ can be defined on domains other than $\M_c$ and $\X_c$ via  \eqref{eq:1} and \eqref{eq:1-prime} whenever the integrals are finite.  
%In some results, we treat $I_h$ as a mapping from $\M$ to $\R$ for a general $\M$. 

As a main subject of the paper, 
a \emph{generalized rank-dependent function}
is a mapping  $R_{h,v}:\M\to \R$ defined by, for  $h\in \mathcal H^{\rm BV}$ and $v:\R\to \R$,
\begin{align}\label{eq-GRDU}
	R_{h,v}(F)=\int_0^\infty h\circ F\circ v^{-1}((x,\infty))\d x + \int_{-\infty}^{0} (h\circ F\circ v^{-1}((x,\infty))-h(1) )\d x,
\end{align}
where $v^{-1}$ is the set-valued inverse of $v$.
The function $v$ is typically increasing in economic applications (e.g., a utility function). The signed Choquet function is a special case of $R_{h,v}$ with $v$ being the identity, and
if $h\in \mathcal H^{\rm DT}$ and $v$ is increasing, then $R_{h,v}$ corresponds to a \emph{rank-dependent utility} of \cite{Q93}. 
Although rank-dependent utilities are well studied in decision theory (see e.g., \cite{WZ19,EL22}), the class of  generalized rank-dependent functions is newly introduced in this paper.
 
A $v$-transform
maps the distribution of a random variable $X$ to the distribution of $v(X)$.\footnote{As shown by \cite{LSW21}, commutation with $v$-transforms essentially characterizes probability distortions among all distributional transforms.} In other words, the distribution $F$ is transformed to $F\circ v^{-1}$.
The mapping $R_{h,v}$ can be formulated as a signed Choquet function under a $v$-transform, i.e.,
\begin{align*}
		R_{h,v}(F)=I_h(F\circ v^{-1})
\end{align*}
or equivalently formulated on $\X$ via $ R_{h,v}(X) = I_{h}(v(X))$. 
%In economics and finance, $v$-transforms are widely used. 
In addition to the expected utility or the rank-dependent utility models,
a pricing functional for options can be seen as the (risk-neutral) expectation of a transformed asset price distribution.

We will encounter discrete distributions throughout the paper. An explicit representation of $R_{h,v}$ with discrete distributions is given below.
%Denote by $\delta_x$ the point-mass at $x\in\R$. 
For a discrete distribution $F$ with the form $F=\sum_{i=1}^n p_i\delta_{x_i}$ where $v(x_1)\ge  \dots\ge v(x_n)$, $p_1,\dots,p_n\ge 0$ and $\sum_{i=1}^n p_i=1$, it holds that 
$$ 
R_{h,v}(F)=\sum_{i=1}^n (h(p_1+\cdots+p_i)-h(p_1+\cdots+p_{i-1}))v(x_i).
$$
The following assumption on the %function $v$ and the 
set of distributions $\M$ will be useful for our characterization results. 

\renewcommand\theassumption{M}
\begin{assumption}\label{assm:1}
%The function $v:\R\to\R$ in the generalized rank-dependent function $R_{h,v}$ satisfies that there exist $x,y,z\in\R$ such that $v(x)>v(y)>v(z)$. 
The set $\M$ is a convex subset of $\mathcal M_c$ and there exist three distinct points $x,y,z\in\R$ such that $\delta_x,\delta_y, \delta_z \in \M$.
%and contains all three-point distributions on three distinct points $x>y>z$, that is,
%\begin{align}\label{eq-setassumption0}
%\left\{p\delta_x+(q-p)\delta_y+(1-q)\delta_z: 0\le p\le q\le1\right\}\subseteq \M.
%\end{align}
\end{assumption}
Assumption \ref{assm:1} is very weak and harmless for any practical purpose. 
For a set $\M$ satisfying Assumption \ref{assm:1}, denote by 
\begin{align}\label{eq-setv}
	\mathcal V_{\M}=\{v:\R\to\R \mid \mbox{$v(x)$, $v(y)$ and $v(z)$ are distinct for some $\delta_x,\delta_y,\delta_z\in \M$}\}.
\end{align}
In the set $\mathcal V_{\M}$,
we do not impose the continuity or the monotonicity on $v$, and we only require that $v$ can take at least three distinct values on the points $ x,y,z $. % in \eqref{eq-setassumption0}.
For instance, this requirement holds true if $v$ is strictly increasing.

%We do not impose the continuity or monotonicity on $v$, and we only require that $v$ can take at least three distinct values.
%For the characterization of generalized rank-dependent functions in Theorem \ref{th-main}, one may safely take $\M=\M_c$ when $v$ takes at least three distinct values.

%The mapping $I_h$ will be restricted to a domain $\M$ that is possibly smaller than $\M_c$, and all properties of $I_h$ are confined to the domain.  
%Furthermore, denote  by  $\mathcal F\subseteq\mathcal M_c$ a set of distributions that contains at least all three point distributions with fixed values $x>y>z$, i.e., %\begin{align}\label{eq-setassumption}
%\left\{p\delta_x+(q-p)\delta_y+(1-q)\delta_z: 0\le p\le q\le1\right\}\subseteq \mathcal F\subseteq\mathcal M_c.
%\end{align}

In what follows, we will take the convexity of $h$ as the primary property as opposite to concavity to consider,
as the preference represented by $I_h$ is strongly risk averse in the sense of \cite{RS70} if $h$ is increasing and convex (Theorem 2 of \cite{Y87}). Changing convexity to concavity makes no real mathematical difference since all results can be written for concavity via a sign change; recall that this is an advantage of working with non-monotone mappings such as generalized rank-dependent functions.

\section{P-quasi-convexity  of  rank-dependent utilities}\label{sec:RDU}

Given the importance of rank-dependent utilities  in economics and finance,
we first present results for this class, although these results find their more general versions in Section \ref{sec:mqc}. 

\subsection{P-quasi-convexity, concavity and linearity}
%We first focus on dual utilities and rank-dependent utilities, which are two classic models in economics and finance. 
Recall that rank-dependent utility is defined by  \eqref{eq-GRDU}
where $h\in\mathcal H^{\rm DT}$ and $v:\R\to\R$ is increasing. The dual utility is a special case of $R_{h,v}$ when $h\in\mathcal H^{\rm DT}$ and $v$ is the identity. %We will show the full characterization of $p$-quasi-convexity for dual utilities and rank-dependent utilities.

%At first, 
We first introduce a few special cases of dual utilities and distortion functions in $\mathcal H^{\rm DT}$ that are important for understanding p-quasi-convexity. There is a one-to-one correspondence between dual utilities and distortion functions in $\mathcal H^{\rm DT}$, and hence, it suffices to study the distortion functions.
For  a distribution $F $, the \emph{left-} and \emph{right-quantile} at level $\alpha\in(0,1)$ are respectively defined by
\begin{align*}
	Q^{-}_{\alpha}(F)=\inf\{x\in \R: F(x)\ge \alpha\}~~{\rm and}~~Q^{+}_{\alpha}(F)=\inf\{x\in \R : F(x)> \alpha\}.
\end{align*} 
For using quantiles as preferences in decision theory, see \cite{R10}.
At the level $\alpha=0$ or $1$, left- and right-quantiles coincide, and they are defined by
\begin{align*}
	Q_0^+(F)&=Q_0^-(F)=Q_0(F)=\inf\{x\in \R : F(x)> 0\}; \\
	Q_1^+(F)&=Q_1^-(F)=Q_{1}(F)=\inf\{x\in \R : F(x)\ge 1\}.
\end{align*}
For some $c\in[0,1]$ and $\alpha\in[0,1]$, the \emph{mixed quantile} is defined by $Q_\alpha^{c}=cQ_\alpha^{+}+(1-c)Q_\alpha^{-}$.
%All mixed quantiles are p-quasi-convex and p-quasi-concave. 
All mixed quantiles have \emph{convex level sets} (CxLS), i.e., $\rho(F)=\rho(G)$ $\Longrightarrow$ $\rho(\lambda F+(1-\lambda)G)=\rho(F)$ for all $\lambda \in [0,1]$ and $F,G\in \M$. This property and monotonicity together imply   p-quasi-convexity and p-quasi-concavity.\footnote{All signed Choquet functions with convex level sets are characterized by \cite{WW20}, which are slightly more than linear transformations of the quantiles and the mean.}
Finally, we introduce a new class of functionals, called the \emph{min-quantile mixtures}, defined by
\begin{align}\label{eq-qmm}
	k Q_\alpha^c+(1-k) Q_0,~~~~ \mbox{for some~} \alpha,c,k\in [0,1].
\end{align}
A min-quantile mixture is a convex combination of the essential infimum and a mixed quantile. \begin{figure}[t]%[htbp]
	{  \setlength{\unitlength}{2.5cm}
		\begin{center}%\footnotesize
			\begin{picture}(1.8,1.5)(0,0) 
				\put(0,0){\vector(0,1){1.3}}
				\put(0,0){\vector(1,0){2}}
				\put(1.73,-0.17){$ 1$}
				\put(-0.15,-0.15){$0$}
			%	\put(-0.3,-0.55){$-b$}
				\put(-0.3,0.15){  $kc$}
				\put(-0.27,0.45){ $k$}
				\put(-0.25,0.9){ $1$}
				\put(0.82,-0.17){  $\alpha$}
				\put(2.05,-0.1){$ p$}
				\put(0.05,1.22){$ h(p)$}  
				\linethickness{0.3mm}
		%		\qbezier(0.035,-0.5)(0.2,-0.5)(0.89,-0.5)
				\qbezier(0.935,0.5)(1.5,0.5)(1.73,0.5) 
				\linethickness{0.3mm} 
				\qbezier(0,0)(0.5,0)(0.885,0) 
				\linethickness{0.1mm} 
				\multiput(0.0,0.2)(0.1,0){9}{\line(1,0){0.05}}
				\multiput(0.0,0.95)(0.1,0){18}{\line(1,0){0.05}} 
%				\put(-0.095,-0.04) {  $  \bullet $}
		%		\put(-0.105,-0.55){ $   \circ  $}
				
	%			\put(0.82,-0.55){   $ \circ $}	
	            \put(0.82,-0.04){ $ \circ $}
				\put(0.82,0.16){ $\bullet$}
				
				\put(0.82,0.465){ $ \circ $}
				\put(1.67,0.91){  $ \bullet $}
				\put(1.67,0.465){ $ \circ $}
			\end{picture}
		\end{center}
	}
	\caption{The distortion function of a min-quantile mixture.}\label{fig:qmm}
\end{figure}  The distortion function of a min-quantile mixture has the form (see Figure \ref{fig:qmm}):
\begin{align*}
h(p)=kc\id_{\{p=\alpha\}}+k\id_{\{\alpha<p<1\}}+\id_{\{p=1\}},~~p\in[0,1].
\end{align*} 

For strictly increasing distortion functions $h$, \cite{W94} and \cite{WY21}
showed that only the convex ones are possible for $I_h$ to be p-quasi-convex. In the larger class  of $I_h$ with $h\in\mathcal H^{\rm DT}$, the following result illustrates that 
the min-quantile mixtures are the only other choice 
satisfying p-quasi-convexity besides those   with convex distortion functions.

\begin{theorem}\label{th-mainRDU}
Suppose that   Assumption \ref{assm:1} holds, $h\in\mathcal H^{\rm DT}$, and  $v:\R\to\R$ is increasing. The following statements are equivalent.
\begin{itemize}
		\item[(i)] $h$ is convex or $I_h$ is a min-quantile  mixture.
		%$=kQ_{\alpha}^c+(1-k)Q_0$ for some $k,\alpha,c\in[0,1]$.
		\item[(ii)] $I_h$ is p-quasi-convex on $\mathcal M$.
		\item[(iii)] $R_{h,v}$ is p-quasi-convex on $\mathcal M$ for some $v\in \mathcal V_{\M}$.
		\item[(iv)] $R_{h,v}$ is p-quasi-convex on $\mathcal M$ for all  functions $v$.
\end{itemize}
\end{theorem}

The proof of the above theorem is follows from the more general result in Theorem \ref{th-main}  in the next section, which studies all generalized rank-dependent functions. 
In (iii), the condition that $v$ can take three different values is essential. Note that  if $v$ is a constant function, then $R_{h,v}$ is p-quasi-convex regardless of $h$.
  Section \ref{sec:mqc} has more discussions on the role of this condition.

 Next, we connect two notions of risk aversion.  
A functional $\rho:\M\to \R$ is \emph{concave-order monotone}
if   $\rho(F)\le \rho(G)$ for all $F,G\in \M_c$ such that $F\le_{\rm cv} G$ where $\le_{\rm cv}$ represents concave order between distributions, i.e., $F\le_{\rm cv}G$ means that $ \int f \d F \le \int f\d G$ for all concave $f:\R\to \R$. 
Concave-order monotonicity of  $\rho$ is equivalent to \emph{strong risk aversion} of the preference represented by $\rho$ in the sense of  \cite{RS70}. 

Since a min-quantile mixture does not have a continuous distortion function, 
  Theorem \ref{th-mainRDU} implies that the only class of continuous distortion functions
yielding p-quasi-convexity
   is that of the convex ones.
  \cite{Y87} showed that a dual utility is strongly risk averse   if and only if the distortion function is convex. 
The next corollary directly follows from this and Theorem \ref{th-mainRDU}.
  \begin{corollary}\label{cor:p-DT}
For a dual utility on $\M_c$ with a continuous distortion function,  probabilistic risk aversion is equivalent to  strong risk aversion.
  \end{corollary}
  
 \cite{W94} showed that the  conclusion of Corollary \ref{cor:p-DT}  holds for strictly increasing $h$. 
Continuity  of $h$ is quite natural, which is implied by the axioms of \cite{Y87}.
 From a decision theoretical standpoint, a continuous distortion function $h$ is empirically plausible as in, e.g., Equation (6) of \cite{TK92} in the framework of cumulative prospect theory.
For rank-dependent utilities, strong risk aversion further requires $v$ to be concave (\cite{CKS87}), and hence the conclusion of Corollary \ref{cor:p-DT} does not hold. 

The following characterizations of p-quasi-concavity and p-quasi-linearity can be obtained in a similar way  to Theorem \ref{th-mainRDU}. Their more general versions are  Corollaries \ref{co-QCX} and \ref{co-linearity} in Section \ref{sec:proof}.
\begin{proposition}\label{prop-DT}
	Suppose that  Assumption \ref{assm:1} holds, and let $v\in\mathcal V_{\M}$ be increasing and  $h\in\mathcal H^{\rm DT}$.
	\begin{itemize}
%		\item[(i)] The following are equivalent: $R_{h,v}$ is p-quasi-convex; $I_h$ is p-quasi-convex;  $h\in\mathcal H_{\rm DT}^{\rm QCX}$.
		\item[(i)] The following are equivalent: $R_{h,v}$ is p-quasi-concave; $I_h$ is p-quasi-concave; $h$ is concave or $I_h=kQ_{\alpha}^c+(1-k)Q_1$ for some $k,\alpha,c\in[0,1]$.
		\item[(ii)] The following are equivalent: $R_{h,v}$ is p-quasi-linear; $I_h$ is p-quasi-linear; $I_h$ is one of the forms: 
		$I_h=\E$; $I_h=cQ_1+(1-c)Q_0$ for some $c\in[0,1]$; $I_h=Q^c_{\alpha}$ for some $c\in[0,1]$ and $\alpha\in(0,1)$.
	\end{itemize}
\end{proposition} 

\begin{remark}
	For $h\in\mathcal H^{\rm DT}$, $I_h$ is monotone and translation invariant.\footnote{A functional
%		 $\rho: \X \to\R$ is monotone if $\rho(X)\le \rho(Y)$ for all $X,Y\in \X$ satisfying $X\le Y$ $\p$-a.s., and it 
		 is translation invariant if $\rho(X+c)=\rho(X)+c$ for all $X\in \X$ and $c\in\R$. }  By Lemma 2.2 of \cite{BB15}, $I_h$ is p-quasi-linearity if and only if it has CxLS. 
		 The class of dual utilities $I_h$ satisfying CxLS is characterized by \citet[Theorem 2]{KP16}, which are those in Proposition \ref{prop-DT} (ii). 
\end{remark}

\subsection{The min-quantile mixtures}

As the min-quantile mixtures are the
only dual utilities satisfying p-quasi-convexity among  besides   convex ones, we can find several properties that identity the min-quantile mixtures. 
For this, some terminology and properties are needed. We say that random variables $X$ and $Y$ are \emph{comonotonic} if there exists $\Omega_0\in\mathcal F$ with $\p(\Omega_0)=1$ such that for all $\omega,\omega'\in\Omega_0$,
\begin{align*}
	(X(\omega)-X(\omega'))(Y(\omega)-Y (\omega'))\ge 0.
\end{align*}
For a mapping $\rho:\mathcal X_c\to \R$ (also treated as a mapping from $\M_c$ to $\R$), we say that $\rho$ is \emph{comonotonic-additive} if for any comonotonic random variables $X,Y\in \mathcal X_c$, $\rho(X+Y)=\rho(X)+\rho(Y)$; %$\rho$ is \emph{uniformly norm-continuous} if it is uniformly continuous with respect to the norm defined by $\|X\|=\esssup|X|$; 
$\rho$ is \emph{p-locally indifferent} if
there exist  $A,B\in\mathcal F$ such that $\p(A)\neq\p(B)$ and $\rho(\id_A)=\rho(\id_B)>\rho(0)$.

Comonotonic additivity is popular in both the literature  of decision theory (e.g., \cite{Y87,S89}) and that of risk measures (e.g., \cite{K01}),
whereas p-local indifference is less known. 
Intuitively, p-local indifference  means that $\rho$ may be indifferent between two events $A$ and $B$ even if their probabilities are different. 
This property is called ``local" because it only states the existence of such a pair of events, but says nothing about general pairs.
Although it may be explained by the inability to assess the small difference between different probabilities, such a behaviour is empirically observed in behaviour experiments where only a few categories of probability are considered. For instance,  \cite{PB75}   showed that children of age 4--5 have only three levels of plausibility conception: certainly true, certainly untrue, uncertain. This is compatible with the min-quantile mixture whose distortion function has two jumps and is constant elsewhere. 
When the probability categories were related to verbal expressions, people often prefer to use only a few verbal categories of probability rather than a continuous scale; see \cite{W86}, \cite{WB95} and \cite{WW96}. \cite{CJS87} investigated the individual decision making under risk and under nonprobabilized uncertainty where they found that on the loss side, people appear to have coarser categories of plausibility than the continuum $[0,1]$.

The next result characterizes the min-quantile mixtures.

\begin{proposition}\label{th:mqm}
	A law-based functional $\rho: \mathcal X_c\to \R$ is monotone, comonotonic-additive, p-quasi-convex  and p-locally indifferent with $\rho(1)=1$ if and only if $\rho=kQ_\alpha^c+(1-k)Q_0$ for some $c\in[0,1]$ and $\alpha, k\in(0,1]$, that is, $\rho$ is a min-quantile mixture excluding the case $Q_0$.
\end{proposition}

\begin{proof}
	Let us first verify sufficiency.  Monotonicity, comonotonic-additivity and $\rho(1)=1$ are trivial. The p-quasi-convexity follows from Theorem \ref{th-mainRDU}. 
	It remains to show that $\rho$ is p-locally indifferent. Let $A$ and $B$ be such that $\p(A)=1-\alpha/2$ and $\p(A)=1-\alpha/3$. Then, we have $\rho(\id_A)=\rho(\id_B)=k>0$. This completes the proof of sufficiency. To see necessity, it follows from \citet[Theorem 1]{WWW20} that $\rho$ is a signed Choquet function, i.e., $\rho(X)=\int_0^\infty h(\p(X>x))\d x + \int_{-\infty}^{0} (h(\p(X>x))-h(1))\d x$ with some $h\in\mathcal H^{\rm BV}$. By \citet[Lemma 2]{WWW20}, monotonicity implies that $h$ is increasing. Furthermore, $\rho(1)=1$ means $h(1)=1$. Hence, we have concluded $h\in\mathcal H^{\rm DT}$. Applying Theorem \ref{th-mainRDU}, we have $\rho=I_h$ for some convex $h\in\mathcal H^{\rm DT}$ or $\rho=kQ_{\alpha}^c+(1-k)Q_0$ for some $k,\alpha,c\in[0,1]$. 
	Since $Q_0$ also has a convex distortion function $h(t)=\id_{\{t=1\}}$,  we only need to verify that $I_h$ with convex $h$  is not p-locally indifferent. 
	P-local indifference implies  $\rho(\id_A)=\rho(\id_B)$ for some $A$, $B$ with $0<\p(A)<\p(B)\le 1$. Denote by $p=\p(A)$ and $q=\p(B)$. It holds that $h(p)=h(q)>0$ with $0<p<q\le 1$, and this implies that $h$ cannot be convex. Hence, we complete the proof.
\end{proof}

We do not aim to promote the min-quantile mixtures as a plausible decision criterion or a new risk measures, as its interpretation and applications are  similar to quantiles.  Nevertheless, we find the class of min-quantile mixtures intriguing, both mathematically and decision-theoretically, as the only class within dual utilities that can accommodate both p-local indifference (inability to distinguish probabilistic assessments) 
and probabilistic risk aversion (dislike of randomness), highlighting a conflict between the two properties.

\section{Generalized rank-dependent functions}
\label{sec:mqc}

This section is the main part of the paper where we obtain a full characterization of p-quasi-convexity for generalized rank-dependent functions, as well as a unifying  equivalence result for signed Choquet functions with strictly monotone or continuous distortion functions. We will mention Assumption \ref{assm:1} if it is needed in a result.
\subsection{Invariance under transforms and three special distortion functions}

%This section is the main part of the paper where we investigate a full characterization of p-quasi-convexity of $R_{h,v}$.
%%which functions $h\in\mathcal H^{\rm BV}$ yield p-quasi-convexity of $R_{h,v}$.  
%%We assume $\M=\M_c$ in all results, and
%We will mention Assumption \ref{assm:1} if it is needed in a result.
  
Recall that the generalized rank-dependent function is a $v$-transform of a signed Choquet function.
Similarly to Proposition 3.5 of \cite{WW20}, we first show that p-quasi-convexity is an invariance property under $v$-transforms.

\begin{proposition}\label{prop-transform}
	For $v:\R\to\R$ and a convex set of distributions $\mathcal M$, define  $\mathcal M^v=\{F\circ v^{-1}: F\in\mathcal M\}$. Then, $R_{h,v}$
is p-quasi-convex on $\mathcal M$ if and only if $I_h$ is p-quasi-convex on $\mathcal M^v$.
\end{proposition}
\begin{proof}
	For any $\lambda\in(0,1)$ and $F,G\in\mathcal M$, denote by $F'=F\circ v^{-1} \in \M^v$,
	$G'=G\circ v^{-1}\in \M^v$, and  we have
	\begin{align*}
		(\lambda F+(1-\lambda)G)\circ v^{-1}=\lambda(F\circ v^{-1})+(1-\lambda)(G\circ v^{-1}) = \lambda F'+(1-\lambda) G', 
	\end{align*}
	which follows directly from the definition of probability measures. 
	Since
	\begin{align*}
		R_{h,v}(\lambda F+(1-\lambda)G) =I_h(\lambda F'+(1-\lambda) G') \mbox{~~and~~}
		\max\{R_{h,v}(F),R_{h,v}(G)\} =  \max\{I_h(F'),I_h(G')\},
	\end{align*}
	we obtain $$R_{h,v}(\lambda F+(1-\lambda)G)\le \max\{R_{h,v}(F),R_{h,v}(G)\}\mbox{~~$\Longleftrightarrow$~~}I_h(\lambda F'+(1-\lambda) G') \le  \max\{I_h(F'),I_h(G')\}. $$
	It follows that
	p-quasi-convexity of $I_h$ on $\M^v$ is equivalent to that of $R_{h,v}$ on $\M$, noting that  
	$F, G$ can be arbitrarily chosen from $\M$ and $F',G'$ can be arbitrarily chosen from $\M^v$. This completes the proof.
\end{proof}

The signed Choquet function $I_h$ in the above proposition can be replaced by any functionals $\rho$, and $\rho_v(F):=\rho(F\circ v^{-1})$ should take place of $R_{h,v}$ simultaneously.
To characterize p-quasi-convexity of generalized rank-dependent functions,
Proposition \ref{prop-transform} implies that we can alternatively consider the characterization of the same property in the class of signed Choquet functions. As a result, p-quasi-convexity of $R_{h,v}$ only depends on the distortion function $h$. 
More precisely, we will show below that the property of $h$ determines p-quasi-convexity of $R_{h,v}$ under Assumption \ref{assm:1} if $v\in\mathcal V_{\M}$
(Theorem \ref{th-main}), but the situation is different if $v$ can only take two values (Proposition \ref{prop:two-point}). It is trivial that p-quasi-convexity holds for all $h\in\mathcal H^{\rm BV}$ if $v$ is a constant function.

We define a few   special cases of signed Choquet functions that will appear later in our characterization result. 
%Note that 
%there is a one-to-one correspondence between signed Choquet functions and distortion functions, and hence, the distortion function can be identified based on the form of signed Choquet function. 
%All mappings below are defined on both $\M_c$ and $\X_c$, and we will use whichever is more convenient in different places. 
%The \emph{left-} and \emph{right-quantile} at level $\alpha\in(0,1)$ are respectively defined by
%\begin{align*}
%Q^{-}_{\alpha}(F)=\inf\{x: F(x)\ge \alpha\}~~{\rm and}~~Q^{+}_{\alpha}(F)=\inf\{x: F(x)> \alpha\}.
%\end{align*} 
%For using quantiles as preferences in decision theory, see  \cite{R10}.
%At the level $\alpha=0$ or $1$, left- and right-quantiles   coincide, and they are defined by
%\begin{align*}
%Q_0^+(F)=Q_0^-(F)=Q_0(F)=\inf\{x: F(x)> 0\}
%\end{align*}
%and
%\begin{align*}
%Q_1^+(F)=Q_1^-(F)=Q_{1}(F)=\inf\{x: F(x)\ge 1\}.
%\end{align*}
%For some $c\in[0,1]$ and $\alpha\in[0,1]$, the \emph{mixed quantile} is defined by $Q_\alpha^{c}=cQ_\alpha^{+}+(1-c)Q_\alpha^{-}$. 
An \emph{asymmetric (negative) spread} is the mapping  
$$ 
S_{a,b}= a Q_0-b Q_1,
$$ 
where $a,b \ge 0$. Note that  $-S_{1,1}=Q_1-Q_0$ is the usual spread of the support of the distribution. We omit ``negative" below for simplicity and   call $S_{a,b}$ an asymmetric spread.  
The \emph{scaled quantile-spread mixture} is given by  
\begin{align}
S_{a,b}+kQ_\alpha^c%=a Q_1 - b Q_0+ kQ_\alpha^c
,~~~~ \mbox{for some~} a,b\ge0, ~\alpha,c\in [0,1],~k\in \R. \label{eq-H2}
\end{align}
A scaled quantile-spread mixture is
 the sum of an asymmetric spread  $S_{a,b}$ and a scaled mixed quantile $kQ_\alpha^c$, and this class includes mixed quantiles, asymmetric spreads. and min-quantile mixtures as special cases.
 Note that we allow $\alpha=0$ or $\alpha=1$ in \eqref{eq-H2}, which leads to 
$
aQ_0- bQ_1$ where $a\ge 0$ or $b\ge 0$, but it does not include the case  $a,b<0$.
 Figure \ref{fig:distortion} reports an example of the distortion functions of an asymmetric spread, a scaled mixed quantile and a scaled quantile-spread mixture.

\begin{figure}[t]%[htbp]
	{  \setlength{\unitlength}{1.8cm}
		\begin{center}\footnotesize
			\begin{picture}(1.8,1.5)(0,-0.5)
				\put(0,-0.6){\vector(0,1){1.8}}
				\put(0,0){\vector(1,0){2}}
				\put(1.73,-0.17){$ 1$}
				\put(-0.15,-0.15){$0$}
				\put(-0.3,-0.55){$-b$}
				\put(-0.46,0.7){$ a-b$}
				\put(2.05,-0.1){$ p$}
				\put(0.05,1.22){$ h(p)$}  
				\linethickness{0.3mm}
				\qbezier(0.03,-0.5)(1.45,-0.5)(1.74,-0.5) 
				\linethickness{0.1mm} 
				\multiput(0.0,0.75)(0.1,0){18}{\line(1,0){0.05}}  
	            \put(-0.105,-0.035) {  $  \bullet $}
				\put(-0.105,-0.55){ $   \circ  $}
				\put(1.67,-0.55){ $ \circ $}
				\put(1.67,0.7){  $ \bullet $}
			\end{picture}\quad\quad\quad\quad\quad
			\begin{picture}(1.8,1.5)(0,0) 
				\put(0,0){\vector(0,1){1.8}}
				\put(0,0){\vector(1,0){2}}
				\put(1.73,-0.17){$ 1$}
				\put(-0.05,-0.17){$ 0$} 
				\put(-0.3,0.52){  $ kc$}
				\put(-0.2,1.43){   $ k$}
				\put(0.82,-0.17){   $\alpha$}
				\put(2,-0.1){   $ p$}
				\put(0,1.8){   $ h(p)$} 
				\linethickness{0.3mm}
				\qbezier(0,0)(0.6,0)(0.89,0)
				\qbezier(0.96,1.44)(1.8,1.44)(1.75,1.44) 
				\linethickness{0.1mm} 
				\multiput(0.0,0.55)(0.1,0){10}{\line(1,0){0.05}} 
				\put(0.82,-0.04){   $ \circ $}
				\put(0.82,0.51){  $ \bullet $}
				\put(0.82,1.4){   $ \circ $}
			\end{picture}\quad\quad\quad\quad\quad\quad\quad
			\begin{picture}(1.8,1.5)(0,-0.5) 
			\put(0,-0.6){\vector(0,1){1.8}}
			\put(0,0){\vector(1,0){2}}
			\put(1.73,-0.17){$ 1$}
			\put(-0.15,-0.15){$0$}
			\put(-0.3,-0.55){$-b$}
				\put(-0.75,0.15){  $-b+kc$}
				\put(-0.68,0.45){ $ -b+k$}
				\put(-0.84,0.9){ $a-b+k$}
				\put(0.82,-0.17){  $\alpha$}
	            \put(2.05,-0.1){$ p$}
                \put(0.05,1.22){$ h(p)$}  
				\linethickness{0.3mm}
\qbezier(0.035,-0.5)(0.2,-0.5)(0.89,-0.5)
\qbezier(0.96,0.5)(1.5,0.5)(1.74,0.5) 

\linethickness{0.1mm} 
\multiput(0.0,0.2)(0.1,0){9}{\line(1,0){0.05}}
\multiput(0.0,0.95)(0.1,0){18}{\line(1,0){0.05}} 
\put(-0.105,-0.035) {  $  \bullet $}
\put(-0.105,-0.55){ $   \circ  $}

\put(0.82,-0.55){   $ \circ $}
\put(0.82,0.15){  $ \bullet $}

\put(0.82,0.45){ $ \circ $}
\put(1.67,0.9){  $ \bullet $}
\put(1.67,0.45){ $ \circ $}
			\end{picture}
		\end{center}
}
	\caption{Distortion functions of an asymmetric spread (left), a scaled mixed quantile (middle) and scaled quantile-spread mixture (right) with $\alpha ,c\in (0,1)$ and $k>0$.}\label{fig:distortion}
\end{figure} 

We denote by  $\mathcal H^{\rm CX}$ the set of all convex distortion functions $h\in \mathcal H^{\rm BV}$ and  by $\mathcal H^{\rm QSM}$ the set of distortion functions of scaled quantile-spread mixtures. The   functions in $ \mathcal H^{\rm QSM}$ have the form
\begin{align}\label{eq:hQSM} h(p)=
-b\id_{\{0<p<\alpha\}}+(-b+kc)\id_{\{p=\alpha\}}
+(-b+k)\id_{\{\alpha<p<1\}}+(a-b+k)\id_{\{p=1\}} ,  ~p\in[0,1],
\end{align}
for some parameters in \eqref{eq-H2}. The asymmetric spread corresponds to $a,b>0$, $\alpha=1$ and $k=0$,   the mixed quantile corresponds to $a=b=0$, $\alpha,c\in(0,1)$ and $k=1$,
and a min-quantile mixture in \eqref{eq-qmm} corresponds to $b=0$, $k\in [ 0,1]$ and $a=1-k$. 
%
%The following two forms of $h$ in the subsets $\mathcal H^{\rm CV}$ and $\mathcal H^{\rm QSM}$ of $\mathcal H^{\rm BV}$  are important for p-quasi-concavity of $I_h$.
%\begin{itemize}
%\item[(i)] $h\in\mathcal H^{\rm CV}$: $h$ is concave. 
%\item[(ii)] $h\in\mathcal H^{\rm QSM}$: For some $a\ge 0$, $b\le 0$, $\alpha\in[0,1]$, $c\in[0,1]$ and $k\in\R$, $h(p)=
%a\id_{\{0<p<\alpha\}}+(kc+a)\id_{\{p=\alpha\}}
%+(k+a)\id_{\{\alpha<p<1\}}+(a+b+k)\id_{\{p=1\}}$, $p\in[0,1]$. In this case,
%\begin{align}\label{eq-H2}
%I_h=a Q_1+kQ_{1-\alpha}^{c}+b Q_0.
%\end{align}
%\end{itemize} 

%For strictly increasing distortion functions $h$, \cite{W94} and \cite{WY21}
%showed that only the convex ones are possible for $I_h$ to be p-quasi-convex.
%Without strict monotonicity, the class of signed Choquet functions is much larger, 
% and this includes the scaled quantile-spread mixtures. 
The next   lemma   
% is not only useful to verify the p-quasi-concavity of the scaled quantile-spread mixtures, but also will be used in the proof of the main theorem in the next section. 
% The lemma 
 says that the asymmetric spread $S_{a,b}$ can be safely added to any $I_h$ without changing its p-quasi-convexity, thus highlighting the special role of $S_{a,b}$.

 \begin{lemma}\label{lm-dis01}
 Let $\mathcal M$ be a convex set of distributions. Suppose that $h,\widetilde{h}\in\mathcal H^{\rm BV}$ satisfies $I_h=S_{a,b}+I_{\widetilde{h}}$ for some $a,b\ge 0$. If $I_{\widetilde{h}}$ is p-quasi-convex on $\mathcal M$, then $I_h$ is also p-quasi-convex on $\mathcal M$.
% 	\begin{itemize}
% 		\item[(i)] If $I_{\widetilde{h}}$ is p-quasi-concave, then $I_h$ is p-quasi-concave.
% 		\item[(ii)] If $\widetilde{h}$ is continuous on $[0,1]$, then $I_{\widetilde{h}}$ is p-quasi-concave if and only if $I_h$ is p-quasi-concave.
% 	\end{itemize}
 \end{lemma}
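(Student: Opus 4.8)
The plan is to isolate the asymmetric spread $S_{a,b}=aQ_0-bQ_1$ and show that, under a non-trivial mixture, it can only fall below the smaller of its two values, so that adding it to an m-quasi-convex mapping cannot destroy m-quasi-convexity. Fix $F,G\in\M$ and $\lambda\in[0,1]$; the cases $\lambda\in\{0,1\}$ are immediate since the mixture equals $F$ or $G$, so I would assume $\lambda\in(0,1)$ and write $M=\lambda F+(1-\lambda)G$. The first and essential step is to evaluate the two support endpoints of $M$. Because $\lambda,1-\lambda>0$ and every CDF lies in $[0,1]$, one has $M(x)>0$ exactly when $F(x)>0$ or $G(x)>0$, and $M(x)\ge 1$ exactly when $F(x)=1$ and $G(x)=1$. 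Since for compactly supported distributions these level sets are half-lines, taking infima gives
$$
Q_0(M)=\min\{Q_0(F),Q_0(G)\}\quad\text{and}\quad Q_1(M)=\max\{Q_1(F),Q_1(G)\}.
$$

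Next, using $a,b\ge 0$, I would bound the two pieces of $S_{a,b}(M)=aQ_0(M)-bQ_1(M)$ separately: $aQ_0(M)\le aQ_0(F)$ and $-bQ_1(M)\le -bQ_1(F)$, whence $S_{a,b}(M)\le S_{a,b}(F)$; the same argument with $G$ in place of $F$ yields
$$
S_{a,b}(M)\le\min\{S_{a,b}(F),S_{a,b}(G)\}.
$$
This is the key inequality, and the crucial feature is that it holds against both $F$ and $G$ simultaneously, which is exactly what the combination below exploits.

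To conclude, I would decompose $I_h(M)=S_{a,b}(M)+I_{\widetilde h}(M)$ and apply m-quasi-convexity of $I_{\widetilde h}$ to obtain $I_h(M)\le S_{a,b}(M)+\max\{I_{\widetilde h}(F),I_{\widetilde h}(G)\}$. Assuming without loss of generality that the maximum is attained at $F$, we get $I_h(M)\le S_{a,b}(M)+I_{\widetilde h}(F)\le S_{a,b}(F)+I_{\widetilde h}(F)=I_h(F)\le\max\{I_h(F),I_h(G)\}$, where the second inequality is the key inequality against $F$. This is precisely m-quasi-convexity of $I_h$.

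I expect the only substantive point to be the first step: the explicit computation of $Q_0$ and $Q_1$ on a mixture, which forces $S_{a,b}$ to be mixture-decreasing once the signs $a,b\ge 0$ are used. Everything afterwards is a one-line combination, and notably no continuity, monotonicity, or convexity assumption on $h$ or $\widetilde h$ enters the argument.
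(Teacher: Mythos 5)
Your proof is correct and follows essentially the same route as the paper's: compute $Q_0$ and $Q_1$ of the mixture as the min and max of the endpoints, use $a,b\ge 0$ to get $S_{a,b}(\lambda F+(1-\lambda)G)\le\min\{S_{a,b}(F),S_{a,b}(G)\}$, and combine with the m-quasi-convexity of $I_{\widetilde h}$ via the elementary inequality $\min\{A_F,A_G\}+\max\{B_F,B_G\}\le\max\{A_F+B_F,A_G+B_G\}$. The paper compresses this into a single chain of inequalities, while you spell out the endpoint computation and the WLOG step more explicitly, but the substance is identical.
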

 \begin{proof}
 %	\underline{(i)} 
 	Suppose that $I_{\widetilde{h}}$ is p-quasi-convex.
 	%Denote by $a=h(0+)-h(0)>0$ and $b=h(1-)-h(1)>0$, and we have $I_{h}=aQ_1-bQ_0+I_{\widetilde{h}}$.
 	For $F,G\in\mathcal M$ and $\lambda\in(0,1)$, we have $Q_1(\lambda F+(1-\lambda) G))=\max\{Q_1(F), Q_1(G)\}$ and $ Q_0(\lambda F+(1-\lambda) G))=\min\{ Q_0(F), Q_0(G)\}$. Hence, by the p-quasi-convexity of $I_{\widetilde{h}}$, we have
 	\begin{align*}
 		I_h(\lambda F+(1-\lambda) G))&=a Q_0(\lambda F+(1-\lambda) G))-bQ_1(\lambda F+(1-\lambda) G))+I_{\widetilde{h}}(\lambda F+(1-\lambda) G))\\
 		&\le \min\{a Q_0(F), a Q_0(G)\}-\max\{b Q_1(F), bQ_1(G)\}+\max\{I_{\widetilde{h}}(F),I_{\widetilde{h}}(G)\}\\
 		&\le \max\left\{aQ_0(F)-bQ_1(F)+I_{\widetilde{h}}(F),aQ_0(G)-bQ_1(G)
 		+I_{\widetilde{h}}(G)\right\}\\
 		&=\max\{ I_{{h}}(F),I_{{h}}(G)\},
 	\end{align*}
 	which implies the p-quasi-convexity of $I_h$.
 \end{proof}

Using Lemma \ref{lm-dis01}, we can 
verify that scaled quantile-spread mixtures are p-quasi-convex. Hence,  both convex distortion functions and distortion functions in \eqref{eq:hQSM} lead to p-quasi-convexity of $I_h$. Combining with Proposition \ref{prop-transform}, the same conclusion also holds for $R_{h,v}$. We summarize these results in the following proposition.
Later we will show that they are the only possibilities.

\begin{proposition}\label{prop:trivial}
If $h\in \H^{\rm CX}\cup H^{\rm QSM}$ and $v:\R\to\R$, then $I_{h}$ and $R_{h,v}$ are both p-quasi-convex on $\mathcal M_c$.
\end{proposition} 
\begin{proof}
Note that $\{F\circ v^{-1}:F\in\mathcal M_c\}\subseteq \mathcal M_c$. It follows from Proposition \ref{prop-transform} that p-quasi-convexity of $I_h$ on $\mathcal M_c$ implies p-quasi-convexity of $R_{h,v}$ on $\mathcal M_c$. Therefore, we only need to consider the case of $I_h$.
By the definition of $I_h$ in \eqref{eq:1}, it is obvious that $I_h$ is p-quasi-convex if $h$ is convex.
%, as we have seen from Lemma \ref{th:www20} , convexity of $h$ implies p-convexity of $I_h$ which is stronger than p-quasi-convexity. 
Suppose that $h\in\mathcal H^{\rm QSM}$ which admits a representation as $I_h=S_{a,b}+k Q_{\alpha}^{c}$ with $a,b\ge 0$, $\alpha,c\in[0,1]$ and $k\in\R$. Since the mapping $k Q_{\alpha}^{c}$ is monotone and has convex level sets (e.g., \cite{WW20}), we know that $k Q_{\alpha}^{c}$ is both  p-quasi-convex and p-quasi-concave, which in turn implies that $I_h$ is p-quasi-convex by applying Lemma \ref{lm-dis01}.  Hence, we complete the proof.
\end{proof}

\subsection{The main result and a corollary on unbounded space}
The following theorem, which is the main technical result of the paper, establishes that the only possible generalized rank-dependent functions or signed Choquet functions with p-quasi-convexity are the ones with distortion functions in Proposition \ref{prop:trivial}, i.e., those with a convex distortion function or a distortion function in \eqref{eq:hQSM}. 
%When we mention Assumption \ref{assm:1}, the notation $x,y,z\in \R$ and the set of distributions $\mathcal M$ are defined in Assumption \ref{assm:1}.

\begin{theorem}\label{th-main}
Suppose that   Assumption \ref{assm:1} holds  and  $h\in\mathcal H^{\rm BV}$. The following statements are equivalent.
\begin{itemize}
\item[(i)] $h\in \mathcal H^{\rm CX}\cup\mathcal H^{\rm QSM}$.
\item[(ii)] $I_h$ is p-quasi-convex on $\mathcal M$.
\item[(iii)] $R_{h,v}$ is p-quasi-convex on $\mathcal M$ for some $v\in \mathcal V_{\M}$.
\item[(iv)] $R_{h,v}$ is p-quasi-convex on $\mathcal M$ for all functions $v$.
\end{itemize}
\end{theorem}

%\begin{remark}
%Necessity of Theorem \ref{th-main} is much more 
%challenging than sufficiency which is proved in Proposition \ref{prop:trivial}. We claim that necessity also holds if $I_h$ is  p-quasi-concave on the set of all simple distributions (the original space is $\mathcal M_c$ which contains all bounded distributions). This is because all distributions used in the proof of Theorem \ref{th-main} are simple distributions. More specifically, Lemma \ref{lm-monotone} and Lemma \ref{lm-main2}, which are two technical tools for the proof of Theorem \ref{th-main}, simply involve two point distributions and seven point distributions, respectively. 
%\end{remark}

%In Theorem \ref{th-main},we do not impose the continuity or the monotonicity on $v$, and we only require that $v$ can take at least three distinct values.
Because of Theorem \ref{th-main}, we will denote by $\mathcal H^{\rm QCX}=\mathcal H^{\rm CX}\cup\mathcal H^{\rm QSM}$,  which is the set of all $h\in \mathcal H^{\rm BV}$ for $R_{h,v}$ and $I_h$ to be p-quasi-convex.
Besides the class of convex distortion functions, the only other choices are  $h\in\mathcal H^{\rm QSM}$. 
One can immediately observe that all distortion functions in $\mathcal H^{\rm QSM}$ are neither continuous nor strictly monotone. 
%Therefore, we conclude that for a continuous or strictly increasing $h$, $I_h$ is p-quasi-convex if and only if $h$ is convex.  {\color{red}This yields to the equivalence  statement  for (vii) in  Theorem \ref{th-grand} in addition to the six conditions in Lemma \ref{th:www20}. (see )} 
 
%  illustrates that p-quasi-concavity is equivalent to the six conditions in Lemma \ref{th:www20}  if adding a natural assumption of continuity or strictly monotonicity of distortion functions.

The following proposition illustrates that 
%if $v$ can only take two distinct values,
if the domain of $R_{h,v}$ is chosen to be the set of all
two-point distributions on two specific points such that the values of $v$ are distinct on these two points,
%or $v$ can only take two distinct
then p-quasi-convexity of $R_{h,v}$ is equivalent to quasi-convexity of $h$. Hence, there are more cases of p-quasi-convex $R_{h,v}$ than in Theorem \ref{th-main}. 
%Hence, Assumption \ref{assm:1} on $\mathcal M$ is needed for a nontrivial characterization of p-quasi-convexity, and Theorem \ref{th-main} has characterized p-quasi-convexity of $R_{h,v}$ as general as can be.
 
\begin{proposition}\label{prop:two-point}
Let $x,y\in\R$ satisfying $x\neq y$, and define $\mathcal M=\{p\delta_x+(1-p)\delta_y: p\in[0,1]\}$. Suppose that $h\in \mathcal H^{\rm BV}$ and $v:\R\to\R$ satisfies $v(x)\neq v(y)$.
Then $R_{h,v}$ is  p-quasi-convex on $\mathcal M$ if and only if $h$ is quasi-convex.
\end{proposition} 
\begin{proof}
The	p-quasi-convexity of $R_{h,v}$ on $\mathcal M$ is equivalent to 
$$
R_{h,v}((\lambda p+(1-\lambda) q)\delta_x+(\lambda(1-p)+(1-\lambda)(1-q))\delta_y)\le \max \{R_{h,v}(p\delta_x+(1-p)\delta_y),R_{h,v}(q\delta_x+(1-q)\delta_y)\}
$$
for any $p,q,\lambda\in[0,1]$. Under some algebra calculations, this is equivalent to
$$
h(\lambda p+(1-\lambda)q)\le \max\{h(p),h(q)\}
$$
for any $p,q,\lambda\in[0,1]$.
Hence, we obtain the desired equivalence.
\end{proof}
Proposition \ref{prop:two-point} illustrates that Assumption \ref{assm:1} on $\mathcal M$ and the constraint $v\in\mathcal V_{\M}$ are needed for a nontrivial characterization of p-quasi-convexity.
Analogously to Proposition \ref{prop:two-point}, one can check that 
for  $h$  quasi-convex and $v$   only taking two values,  $R_{h,v}$ is p-quasi-convex on any convex set of distributions.

%Below we present a corresponding result to Theorem \ref{th-main} for the class of increasing distortion functions. Different from the strictly increasing case, there are nonconvex distortion functions that make $R_{h,v}$ p-quasi-convex.
%\begin{corollary}\label{co-increasing}
%Suppose that $\mathcal M$ satisfies Assumption \ref{assm:1}. For increasing $h\in\mathcal H^{\rm BV}$ and $v\in\mathcal V_{\M}$, $R_{h,v}$ is p-quasi-convex on $\mathcal M$ if and only if $h$ is convex or $h$ is a distortion function of $I_h=a Q_0+k Q_{\alpha}^c$ with some $a\ge 0$, $k\ge 0$, $\alpha\in[0,1]$ and $c\in[0,1]$.
%\end{corollary}

Below we present a corresponding result to Theorem \ref{th-main} that the generalized rank-dependent functions are defined on a general space that contains a distribution of unbounded random variables.  
%Denote by $\mathcal M_{\mathcal L}$ a convex set of distributions that satisfies Assumption \ref{assm:1} and contains a distribution of random variable, which is denoted by $X$, that is unbounded both from below and from above. Denote by 
%\begin{align*}
%	\mathcal H^{\mathcal L}=\{h\in\mathcal H^{\rm BV}: |R_{h,v}(F)|<\infty~{\rm for~all}~F\in\mathcal M_{\mathcal L}\}.
%\end{align*}
%which is the set of distortion functions suitable for $R_{h,v}$ to be real-valued on $\mathcal M_{\mathcal L}$.

%Denote by $\mathcal L$ a set of random variables containing all three-point random variables on three specific points, such that there exists $X\in\mathcal L$ such that $X$ is unbounded both from below and from above. We use $\mathcal M_{\mathcal L}$ to represent the set of distributions of all random variables in $\mathcal L$, and we assume that $\mathcal M_{\mathcal L}$ is convex. Denote by
%\begin{align*}
%\mathcal H^{\mathcal L}=\{h\in\mathcal H^{\rm BV}: |I_h(X)|<\infty~{\rm for~all}~X\in\mathcal L\},
%\end{align*}
%which is the set of distortion functions suitable for $I_h$ to be real-valued on $\mathcal L$.

\begin{corollary}\label{coro:gen}
Suppose that  Assumption \ref{assm:1} holds. Define a convex set $\mathcal M'$ such that  $\mathcal M\subseteq \mathcal M'$ and $\mathcal M'$ contains the distribution of a random variable, denoted by $X$, that is unbounded both from below and from above. Let $h\in\mathcal H^{\rm BV}$ and $v\in\mathcal V_{\M}$ satisfying $\essinf v(X)=-\infty$ and $\esssup v(X)=+\infty$. Then
$R_{h,v}:\mathcal M'\to\R$ is p-quasi-convex if and only if $h$ is convex and continuous or $h$ is the distortion function of mixed quantiles $I_h=kQ_{1-\alpha}^c$ for some $\alpha\in(0,1)$, $c\in[0,1]$ and $k\in\R$.
%Let $h\in\mathcal H^{\mathcal L}$ and $v\in\mathcal V_{\M}$. Suppose that $R_{h,v}$ is defined on $\mathcal M_{\mathcal L}$ which satisfies that $v(x),v(y),v(z)$ are distinct, and $\essinf v(X)=-\infty$ and $\esssup v(X)=+\infty$. Then
%$R_{h,v}$ is p-quasi-convex if and only if $h$ is convex and continuous or $h$ is the distortion function of mixed quantiles $I_h=kQ_{1-\alpha}^c$ for some $\alpha\in(0,1)$, $c\in[0,1]$ and $k\in\R$. 
\end{corollary} 

In the case of signed Choquet function ($v$ is the identity),
Corollary \ref{coro:gen} formalizes the observation that, since asymmetric spreads are not finite-valued on a general space with the distribution of unbounded random variables, we are left with only the class of convex distortion functions and those that correspond to scaled mixed quantiles.  

%As a consequence of Corollary \ref{coro:gen}, on $\mathcal M_{\mathcal L}$, 
%except for the case of a scaled mixed quantile, 
%  (i)-(vii) in  Section \ref{sec:2} are all equivalent.

%  A special case of $\mathcal L=L^q$, $q\in [1,\infty)$, i.e., the normed space of random variables with finite $q$-th moment, is summarized in the corollary below. 
% 
%\begin{corollary}
%Let $h\in\mathcal H^{  L^q}$, $q\in [1,\infty)$. 
%For a norm-continuous mapping $I_h:\M_{L^q}\to \R$, (i)-(vii) in  Theorem \ref{th-grand} are equivalent. 
%\label{coro:gen2}
%\end{corollary}
%Corollary \ref{coro:gen2} follows  from Corollary \ref{coro:gen} and the fact that mixed quantiles are not 
%norm-continuous.
%Different from  the other results, Corollary \ref{coro:gen2} imposes a continuity condition on $I_h$, instead of $h$,  to guarantee the equivalence of (i)-(vii).

\subsection{A unifying equivalence}
In this section, we present a unifying equivalence result on signed Choquet functions to illustrate the power of our main result.
In what follows,  the space is assumed to be $\M_c$ or $\X_c$, and $I_h$ is \emph{o-superadditive} if $I_h(X+Y)\ge I_h(X)+I_h(Y)$ for all $X,Y\in \X_c$. 

%\item Convexity: $I_h(\lambda X+(1-\lambda)Y)\le \lambda I_h(X)+(1-\lambda)I_h(Y)$ for all $X,Y\in L^\infty$ and $\lambda\in[0,1]$.
%\item Quasi-convexity: $I_h(\lambda X+(1-\lambda)Y)\le \max\{I_h(X),I_h(Y)\}$ for all $X,Y\in L^\infty$ and $\lambda\in[0,1]$.
%\item M-concavity: $I_h(\lambda F+(1-\lambda)G)\ge \lambda I_h(F)+(1-\lambda)I_h(G)$ for all $F,G\in \mathcal M_c$ and $\lambda\in[0,1]$.
%\item M-quasi-concavity: $I_h(\lambda F+(1-\lambda)G)\ge \min\{I_h(F),I_h(G)\}$ for all $F,G\in \mathcal M_c$ and $\lambda\in[0,1]$.
%\end{enumerate}

%From the perspective of mixing, convexity and quasi-convexity concer 

%Under some mild assumption on distortion functions, a surprising equivalence between the six conditions
%, which is the main result of this paper, 
%is shown in the following theorem.

\begin{theorem}\label{th-grand} 
If $h\in\H^{\rm BV}$ 
is continuous or strictly monotone, then the following are equivalent:
	(i) $h$ is convex;
	(ii) $I_h$ is concave-order monotone;
	(iii) $I_h$ is o-superadditive;
	(iv) $I_h$ is o-concave;
	(v) $I_h$ is o-quasi-concave;
	(vi) $I_h$ is p-convex;
	(vii)  $I_h$ is p-quasi-convex.
\end{theorem}

Theorem \ref{th-grand} can be broken down to a few pieces. The first six properties,  (i)-(vi) in Theorem \ref{th-grand}, are shown to be equivalent by \citet[Theorem 3]{WWW20} without any assumption on $h\in\H^{\rm BV}$.\footnote{\cite{WWW20} used the   narrative of risk measures, where results are stated for concave $h$; we seamlessly translate these results by a sign change.} %That is (i)-(vi) in Theorem \ref{th-grand} are equivalent for $h\in\mathcal H^{\rm BV}$. 
%We summarize this fact in the following lemma.
%% which will be used repeatedly in the paper. 
%
%\begin{lemma}[\cite{WWW20}]\label{th:www20} For $h\in\H^{\rm BV}$,  (i)-(vi) in Theorem \ref{th-grand} are equivalent.
%\end{lemma}
%Lemma \ref{th:www20}  
This result 
does not include p-quasi-convexity.
Indeed, p-quasi-convexity is not always equivalent to the above six conditions.  
%One may wonder whether p-quasi-concavity is also equivalent to the above six conditions.
In particular, any left or right quantile
is both p-quasi-convex and p-quasi-concave, but it is not o-concave or o-convex. %This explains the reason to state the assumption on distortion functions in Theorem \ref{th-grand}. 
Nevertheless, p-quasi-convexity and p-convexity are not too far away from each other.
\citet[Theorem 24]{W94} showed that (i) and (vii) in Theorem \ref{th-grand} are equivalent if $h\in\mathcal H^{\rm DT}$ is strictly increasing.\footnote{
	Under the same assumption,  \citet[Corollary 8]{WY21}
	showed the equivalence of (v) quasi-concavity  and (vii) p-quasi-convexity.} 
%\begin{lemma}[\cite{W94}]\label{th:w94} For strictly increasing $h\in\H^{\rm DT}$, 
%	(i) and
%	(vii) in Theorem \ref{th-grand} are equivalent. 
%\end{lemma}
%Lemma \ref{th:w94}  suggests that p-quasi-concavity is not far away from the other conditions, as soon as $h$ is strictly increasing.
%; under this condition the number of equivalent properties in consideration is increased from six to seven.
It is obvious that the normalization $h(1)=1$ does not matter, and thus $\H^{\rm DT}$ can be safely replaced by $\H^{\rm BV}$ in the above result. Note that $I_h=-I_{-h}$ for all $h\in\mathcal H^{\rm BV}$. Hence, for any strictly decreasing $h\in \H^{\rm BV}$, (i) and (vii) are also equivalent. Combining these arguments, we obtain the equivalence in Theorem \ref{th-grand} under the assumption that $h$ is strictly monotone. 
From a technical standpoint, strict monotonicity rules out any non-trivial considerations for signed Choquet functions over dual utilities (up to a sign change). 
 The remaining part of Theorem \ref{th-grand}, concerning continuous $h$, relies on Theorem \ref{th-main}.

Another result about the equivalence of (i)-(vii) is based on $L^q$, $q\in [1,\infty)$, i.e., the normed space of random variables with finite $q$-th moment. 
%Denote by
%\begin{align*}
%	\mathcal H^{L^q}=\{h\in\mathcal H^{\rm BV}: |I_h(X)|<\infty~{\rm for~all}~X\in L^q\}.
%\end{align*}

\begin{corollary}
	Let $h\in\mathcal H^{\rm BV}$, $q\in [1,\infty)$. 
	For a norm-continuous mapping $I_h:L^q\to \R$, (i)-(vii) in  Theorem \ref{th-grand} are equivalent. 
	\label{coro:gen2}
\end{corollary}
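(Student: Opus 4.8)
The plan is to deduce the statement directly from Corollary \ref{coro:gen} together with Lemma \ref{th:www20}, using the norm-continuity hypothesis only to eliminate one exceptional case. First I would check that $L^q$ with $q\in[1,\infty)$ is an admissible instance of the abstract space $\mathcal L$ of Corollary \ref{coro:gen}: it contains every finitely supported random variable (in particular all three-point ones on fixed points), it contains random variables unbounded from both below and above (e.g. any two-sided distribution with finite $q$-th moment), and $\mathcal M_{L^q}$ is convex since the $q$-th absolute moment of a mixture $\lambda F+(1-\lambda)G$ equals $\lambda\int|x|^q\,\d F+(1-\lambda)\int|x|^q\,\d G<\infty$. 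Hence Corollary \ref{coro:gen} applies with $\mathcal L=L^q$.

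By Lemma \ref{th:www20} the conditions (i)--(vi) are already equivalent for every $h\in\mathcal H^{\rm BV}$, and (vi) m-convexity trivially implies (vii) m-quasi-convexity, so the implication (i)$\Rightarrow$(vii) needs no continuity. It therefore remains only to prove (vii)$\Rightarrow$(i) under the standing assumption that $I_h$ is norm-continuous. By Corollary \ref{coro:gen}, m-quasi-convexity of $I_h$ forces either $h$ to be convex and continuous---in which case (i) holds---or $I_h=kQ_{1-\alpha}^c$ for some $\alpha\in(0,1)$, $c\in[0,1]$, $k\in\R$. In the latter case I would argue that norm-continuity forces $k=0$; and $k=0$ gives $I_h\equiv0$, equivalently $h\equiv0$ (since $I_h(p\delta_1+(1-p)\delta_0)=h(p)$ recovers $h$), which is convex. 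Either way (i) holds, closing the equivalence.

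The crux---and the only real work---is to show that a nondegenerate mixed quantile $Q_\beta^c$, with $\beta=1-\alpha\in(0,1)$ and $c\in[0,1]$, is \emph{not} norm-continuous on $\mathcal M_{L^q}$, so that $kQ_\beta^c$ can be norm-continuous only when $k=0$. I would exhibit an explicit $L^q$-convergent sequence across which the mixed quantile jumps. Take $X=\id_{\{U>\beta\}}$ with $U$ uniform on $[0,1]$, so that $Q_\beta^-(X)=0$, $Q_\beta^+(X)=1$, and $Q_\beta^c(X)=c$. For the perturbation $X_n=\id_{\{U>\beta+1/n\}}$ one has $F_{X_n}(0)=\beta+1/n>\beta$, hence $Q_\beta^-(X_n)=Q_\beta^+(X_n)=0$ and $Q_\beta^c(X_n)=0$, while $\E[|X_n-X|^q]=\P(\beta<U\le\beta+1/n)=1/n\to0$; this breaks continuity whenever $c>0$. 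For the opposite perturbation $X_n=\id_{\{U>\beta-1/n\}}$ one gets $Q_\beta^-(X_n)=Q_\beta^+(X_n)=1$, so $Q_\beta^c(X_n)=1$, with the same $L^q$ estimate, breaking continuity whenever $c<1$. Since at least one of $c>0$, $c<1$ holds for every $c\in[0,1]$, the mixed quantile is never norm-continuous, which is exactly the obstruction needed. I expect this construction of the discontinuity witness to be the main step; the remainder is bookkeeping around Corollary \ref{coro:gen} and Lemma \ref{th:www20}.
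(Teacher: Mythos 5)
Your proof is correct and follows essentially the same route as the paper: reduce to Corollary \ref{coro:gen}, note that (i)--(vi) are equivalent by Lemma \ref{th:www20}, and then exhibit an explicit $L^q$-convergent sequence on which a nondegenerate mixed quantile jumps. Your discontinuity witness (shifting the indicator threshold past $\beta$ from each side so as to cover every $c\in[0,1]$) differs only in detail from the paper's (which perturbs a two-point distribution by inserting a small atom at an intermediate value), and your extra bookkeeping around the degenerate case $k=0$ is harmless.
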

Corollary \ref{coro:gen2} follows  from Corollary \ref{coro:gen} and the fact that mixed quantiles are not norm-continuous.
Note that the same cannot be said on $L^\infty$ since all signed Choquet functions are norm-continuous on $L^\infty$ (e.g., \citet[Theorem 1]{WWW20}).
Different from  the other results, Corollary \ref{coro:gen2} imposes a continuity condition on $I_h$, instead of $h$,  to guarantee the equivalence of (i)-(vii).

\section{Proofs of Section \ref{sec:mqc} and further consequences}\label{sec:proof} %and Corollary  \ref{coro:gen}}
 
This section is dedicated to a proof of Theorem \ref{th-main}, which includes Theorem \ref{th-mainRDU} as a special case, and it further leads to the results in Corollaries %\ref{co-increasing}, 
\ref{coro:gen} and \ref{coro:gen2}. Moreover,   characterizations of p-quasi-concavity and p-quasi-linearity, which are  presented later,  also follow from Theorem \ref{th-main}.
%Several technical lemmas are needed, as the proof is much more involved than the case of strictly increasing ones. 

\subsection{Proofs of Theorem \ref{th-main} and its corollaries in Section \ref{sec:mqc}}
Note that (i) $\Rightarrow$ (ii), (iii), (iv) are proved in Proposition \ref{prop:trivial}, and it is obvious that (iv) $\Rightarrow$ (iii).
Next, we will show the assertion that (ii) $\Rightarrow$ (i) implies (iii) $\Rightarrow$ (i), and this means that it suffices to verify (ii) $\Rightarrow$ (i) for the completeness of the proof of Theorem \ref{th-main}. 
To see this assertion,
suppose that (ii) $\Rightarrow$ (i) holds.
Recall the set $\mathcal M$ that satisfies Assumption \ref{assm:1}.
We define the set of all $v$-transforms in $\mathcal M$ as 
\begin{align*}
	\mathcal M^v=\{F\circ v^{-1}: F\in\mathcal M\}\supseteq	\left\{p\delta_{v(x)}+(q-p)\delta_{v(y)}+(1-q)\delta_{v(z)}: 0\le p\le q\le1\right\}.
\end{align*}
It holds that $\mathcal M^v$ satisfies Assumption \ref{assm:1}.
By Proposition \ref{prop-transform}, if (iii) holds, then $I_h$ is p-quasi-convex on $\mathcal M^v$.
Using the result that (ii) $\Rightarrow$ (i) holds (note that the three distinct points in Assumption \ref{assm:1} can be chosen arbitrarily, and now let them be $v(x),v(y), v(z)$), we arrive at (i). Hence, (iii) $\Rightarrow$ (i) is verified.

%\renewcommand\theassumption{3P'}
%\begin{assumption}\label{assm:1}
%	The set $\M'$ is a convex subset of $\mathcal M_c$ and contains all three-point distributions on three specific points $x>y>z$, that is,
%	\begin{align}\label{eq-setassumption}
%		\left\{p\delta_x+(q-p)\delta_y+(1-q)\delta_z: 0\le p\le q\le1\right\}\subseteq \M' \subseteq\mathcal M_c.
%	\end{align}
%\end{assumption}

%\begin{theorem}\label{th-I_h}
%	Suppose that Assumption \ref{assm:1} holds and $h\in\mathcal H^{\rm BV}$. If $I_h$ is p-quasi-convex on $\mathcal M$, then $h\in\mathcal H^{\rm CX}\cup \mathcal H^{\rm QSM}$.
%\end{theorem}

%To see why Theorem \ref{th-I_h} can complete the proofs of (ii) $\Rightarrow$ (i) and (iii) $\Rightarrow$ (i),
%first, one can observe that Assumption \ref{assm:1} is stronger than Assumption \ref{assm:1}, and hence, (ii) $\Rightarrow$ (i) holds by applying Theorem \ref{th-I_h}. Second, recall the set $\mathcal M$ satisfies Assumption \ref{assm:1}.
%We define the set of all $v$-transforms in $\mathcal M$ as 
%\begin{align*}
%	\mathcal M^v=\{F\circ v^{-1}: F\in\mathcal M\}\supseteq	\left\{p\delta_{v(x)}+(q-p)\delta_{v(y)}+(1-q)\delta_{v(z)}: 0\le p\le q\le1\right\}.
%\end{align*}
%This means that $\mathcal M^v$ satisfies Assumption \ref{assm:1}. By Proposition \ref{prop-transform}, if (iii) holds, then $I_h$ is p-quasi-convex on $\mathcal M^v$ which further implies (i) by applying Theorem \ref{th-I_h}. 

In the following, we aim to prove (ii) $\Rightarrow$ (i) where several technical lemmas are needed. When we mention Assumption \ref{assm:1}, $x,y,z\in\R$ represent the distinct points defined in Assumption \ref{assm:1}, and we assume without loss of generality that $x>y>z$.
Suppose that $\mathcal M$ satisfies Assumption \ref{assm:1} and $h\in\mathcal H^{\rm BV}$.
We define a bivariate function as follows  
\begin{align}\label{eq-biquasifunction}
	\pi(p,q)= (x-y)h(p)+(y-z)h(q), ~~(p,q)\in T_2:=\{(a,b)\in [0,1]:a\le b\}.
\end{align} 
%where $x,y,z$ are defined in Assumption \ref{assm:1}, and assume without loss of generality that $x>y>z$. 
The first lemma shows that the p-quasi-convexity of $I_h$ on $\mathcal M$ implies that $\pi(p,q)$ is quasi-convex on $T_2$.

\begin{lemma}\label{prop:quasi-pi}
Suppose that $\mathcal M$ satisfies Assumption \ref{assm:1}. For $h\in\mathcal H^{\rm BV}$, if $I_h$ is p-quasi-convex on $\mathcal M$, 
then the function $\pi$ defined in \eqref{eq-biquasifunction} is quasi-convex on $T_2$.  In particular, $h$ is  quasi-convex on $[0,1]$.
\end{lemma}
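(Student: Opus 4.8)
The plan is to transport the m-quasi-convexity of $I_h$ through the explicit three-point family furnished by Assumption~\ref{assm:1} and read off quasi-convexity of $\pi$ essentially for free. Write $F_{p,q}=p\delta_x+(q-p)\delta_y+(1-q)\delta_z$ for $(p,q)\in T_2$; by Assumption~\ref{assm:1} every such $F_{p,q}$ lies in $\mathcal M$. The first step is to evaluate $I_h(F_{p,q})$ from the defining formula \eqref{eq:1}. The survival function $1-F_{p,q}$ equals $1$ on $(-\infty,z)$, equals $q$ on $[z,y)$, equals $p$ on $[y,x)$, and equals $0$ on $[x,\infty)$, so integrating $h$ of it (using $h(0)=0$) yields
\begin{align*}
I_h(F_{p,q})=(x-y)h(p)+(y-z)h(q)+z\,h(1)=\pi(p,q)+z\,h(1).
\end{align*}
To avoid bookkeeping about where $0$ sits relative to $z<y<x$, I would first invoke the translation identity $I_h(X+c)=I_h(X)+c\,h(1)$ (immediate from \eqref{eq:1}) to assume $0\le z$, after which the integral over $(-\infty,0)$ vanishes and the displayed evaluation is a one-line computation. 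The upshot is that, on the three-point family, $I_h$ agrees with $\pi$ up to the additive constant $z\,h(1)$.

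The second step is the key structural observation that probabilistic mixing of these distributions is \emph{affine} in the parameters: for $(p_1,q_1),(p_2,q_2)\in T_2$ and $\lambda\in[0,1]$,
\begin{align*}
\lambda F_{p_1,q_1}+(1-\lambda)F_{p_2,q_2}=F_{\lambda(p_1,q_1)+(1-\lambda)(p_2,q_2)},
\end{align*}
since the masses at $x,y,z$ are each affine in $(p,q)$, and the convex set $T_2$ is closed under this operation. Combining this with the evaluation above, m-quasi-convexity of $I_h$ on $\mathcal M$ gives, for all such $(p_1,q_1),(p_2,q_2)$ and $\lambda$,
\begin{align*}
\pi\bigl(\lambda(p_1,q_1)+(1-\lambda)(p_2,q_2)\bigr)\le\max\{\pi(p_1,q_1),\pi(p_2,q_2)\},
\end{align*}
because the common additive constant $z\,h(1)$ cancels between the two sides of the maximum. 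This is precisely quasi-convexity of $\pi$ on $T_2$.

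For the final ``in particular'' claim, I would restrict $\pi$ to the diagonal segment $\{(p,p):p\in[0,1]\}\subseteq T_2$, which is convex and parametrized affinely by $p$; there $\pi(p,p)=(x-z)h(p)$ with $x-z>0$. Quasi-convexity of $\pi$ along this segment therefore transfers to quasi-convexity of $(x-z)h$, and dividing by the positive constant $x-z$ yields quasi-convexity of $h$ on $[0,1]$ (one may equally use the edge $\{(0,q):q\in[0,1]\}$, where $\pi(0,q)=(y-z)h(q)$). The only genuine work here is the explicit evaluation of $I_h$ on the three-point distributions and the verification of the affine mixing identity; neither is deep, and once they are in place both conclusions follow immediately, so I do not anticipate a substantive obstacle beyond the careful bookkeeping of the constant $z\,h(1)$, which the translation identity cleanly disposes of.
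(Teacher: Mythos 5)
Your proof is correct and follows essentially the same route as the paper: evaluate $I_h$ on the three-point family to get $I_h(F_{p,q})=zh(1)+\pi(p,q)$, use the affine mixing identity to transfer m-quasi-convexity of $I_h$ to quasi-convexity of $\pi$ on $T_2$, and then restrict to a line to deduce quasi-convexity of $h$ (the paper uses the edge $p=0$, you use the diagonal, which is an immaterial difference). The translation device for the bookkeeping of $z\,h(1)$ is a harmless simplification not present in the paper but changes nothing of substance.
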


\begin{proof}
	For any $0\le p\le q\le 1$, we have
	%Denote by $F_{p,q}=p\delta_x+(q-p)\delta_y+(1-q)\delta_z$ with $0\le p\le q\le 1$. We have 
	\begin{align*}
		I_h(p\delta_x+(q-p)\delta_y+(1-q)\delta_z)&=xh(p)+y(h(q)-h(p))+z(h(1)-h(q))\\
		&=zh(1)+(x-y)h(p)+(y-z)h(q)\\
		&=zh(1)+\pi(p,q).
	\end{align*}
	%To see necessity, suppose that $I_h$ is p-quasi-concave. We have for any $\lambda\in[0,1]$,
	%\begin{align*}
	%	I_h\left(\lambda F_{p_1,p_2}+(1-\lambda)F_{p_1',p_2'}\right)
	%	&=I_h\left(F_{\lambda p_1+(1-\lambda)p_1',\lambda p_2+(1-\lambda)p_2'}\right)\\
	%	&=zh(1)+\pi(\lambda p_1+(1-\lambda)p_1', \lambda (p_1+p_2)+(1-\lambda)(p_1'+p_2')).
	%\end{align*}
	One can verify that the p-quasi-convexity of $I_h$ implies  
	\begin{align}\label{eq-quasieq1}
		\pi(\lambda p_1+(1-\lambda)p_2,\lambda q_1+(1-\lambda)q_2)\le \max\{\pi(p_1,q_1), \pi(p_2,q_2)\}
	\end{align}
	for any $\lambda,p_1,p_2,q_1,q_2\in[0,1]$ with $p_1\le q_1$ and $p_2\le q_2$. 
	%Donote by $q=p_1+p_2$ and $q'=p_1'+p_2'$. \eqref{eq-quasieq1} is equivalent to \com{(12)? {\color{blue} Yes, I revise it.}}
	%\begin{align}\label{eq-quasieq2}
	%	\pi(\lambda p_1+(1-\lambda)p_1',\lambda q+(1-\lambda)q')\ge \min\{\pi(p_1,q), \pi(p_1',q')\}
	%\end{align}
	%for any $0\le p_1\le q\le 1$ and $0\le p_1'\le q'\le 1$.
	This is equivalent to the quasi-convexity of $\pi$ on $T_2$. Let $p=0$ in \eqref{eq-biquasifunction},   the quasi-convexity of $\pi$ implies the quasi-convexity of $h$ on $[0,1]$. This completes the proof.
\end{proof}

The second lemma is a generalization of Lemma 26 of \cite{W94}, who considered the strictly increasing distortion functions, to the set $\mathcal H^{\rm BV}$ and the domain $\mathcal M$ that satisfies Assumption \ref{assm:1}. For $h\in\mathcal H^{\rm BV}$ and $0\le p<q\le 1$ such that $h(p)\neq h(q)$,
we define $\lambda_h(p,q)$ as
\begin{align}\label{eq-lambda_h}
	\lambda_h(p,q)=\frac{h(p)/2+h(q)/2-h(p/2+q/2)}{|h(q)-h(p)|}.
\end{align}
The value of this function can be interpreted as a measure of the local convexity of $h$.

\begin{lemma}\label{lm-generalmain2}
Let $h\in\mathcal H^{\rm BV}$ and $0\le p<q\le s<t\le1$.
Suppose that $\mathcal M$ satisfies Assumption \ref{assm:1} and $I_h$ is p-quasi-convex on $\mathcal M$. If $h(p)\neq h(q)$, $h(s)\neq h(t)$ and $|h(q)-h(p)|(x-y)=|h(t)-h(s)|(y-z)$, then we have $\lambda_h(p,q)+\lambda_h(s,t)\ge 0$.
\end{lemma}

\begin{proof}
	%	By Lemma \ref{lm-monotone}, the distortion functions to be considered can be constrained into $\mathcal H^{\uparrow}\cup\mathcal H^{\downarrow}\cup\mathcal H^{\Lambda}$. For $h\in\mathcal H^{\uparrow}\cup\mathcal H^{\downarrow}\cup\mathcal H^{\Lambda}$, 

	Let $0\le p<q\le s<t\le1$ and $h(p)\neq h(q)$ and $h(s)\neq h(t)$. By Lemma \ref{prop:quasi-pi}, it holds that $h$ is quasi-convex on $[0,1]$. Thus, 
	one can check that only one of the following three cases will happen. \underline{Case 1}: $h(p)<h(q)\le h(s)<h(t)$; \underline{Case 2}: $h(p)>h(q)\ge h(s)>h(t)$; \underline{Case 3}: $h(p)>h(q)$ and $h(s)<h(t)$.
	
	Let us first verify Cases 1 and 2. Suppose that $0\le p<q\le s<t\le 1$,
	and $h(p)<h(q)\le h(s)<h(t)$ or $h(p)>h(q)\ge h(s)>h(t)$. All distributions in this proof are of the form 
	\begin{align}\label{eq-seven}
	p\delta_{x_1}+\frac{q-p}{2}\delta_{x_2}+\frac{q-p}{2}\delta_{x_3}+
	(s-q)\delta_{x_4}+\frac{t-s}{2}\delta_{x_5}+\frac{t-s}{2}\delta_{x_6}
	+(1-t)\delta_{x_7}
	\end{align}
	with fixed probabilities and $x_1,\dots,x_7\in\{x,y,z\}$ with $x_1\ge\dots \ge x_7$. Hence, we will simply use $F=(x_1,\cdots,x_7)$ to represent a distribution $F$, i.e., $F$ has the form in \eqref{eq-seven}.
	Define $F=(x,y,y,y,y,y,z)$ and $G=(x,x,x,y,z,z,z)$.  
%	  two distributions $F,G$  such that the values  $(x_1,\dots,x_7)$ of them equal to $(x,y,y,y,y,y,z)$ and $(x,x,x,y,z,z,z)$, respectively. 
One can verify that 
	$$
	I_h(F)-I_h(G)=(y-z)(h(t)-h(s))-(x-y)(h(q)-h(p)).
	$$
	Since $h(q)-h(p)$ and $h(t)-h(s)$ have the same sign, and $x>y>z$, the condition $|h(q)-h(p)|(x-y)=|h(t)-h(s)|(y-z)$ in the lemma implies $I_h(F)=I_h(G)$.
	By the p-quasi-convexity of $I_h$, we have $I_h((F+G)/2)\le(I_h(F)+I_h(G))/2$ where 
	$(F+G)/2=(x,x,y,y,y,z,z)$. This leads to
	$$
	\left(\frac{h(p)+h(q)}{2}-h\left(\frac{p+q}{2}\right)\right)(x-y)
	+\left(\frac{h(s)+h(t)}{2}-h\left(\frac{s+t}{2}\right)\right)(y-z)\ge 0.
	$$
	Dividing by the positive factors in  $|h(q)-h(p)|(x-y)=|h(t)-h(s)|(y-z)$, we have 
	$$
	\frac{1}{|h(q)-h(p)|}	\left(\frac{h(p)+h(q)}{2}-h\left(\frac{p+q}{2}\right)\right)
	+\frac{1}{|h(t)-h(s)|}\left(\frac{h(s)+h(t)}{2}-h\left(\frac{s+t}{2}\right)\right)\ge 0.
	$$
	This gives $\lambda_h(p,q)+\lambda_h(s,t)\ge 0$. To see Case 3,
	we only need to take $F=(x,y,y,y,z,z,z)$ and $G=(x,x,x,y,y,y,z)$,
%	replace the outcomes of constructed distributions $F,G$ by $(x,y,y,y,z,z,z)$ and $(x,x,x,y,y,y,z)$, respectively, 
and then, it follows from a similar proof of Cases 1 and 2 in the previous arguments.
\end{proof}

The third Lemma is divided into three cases to consider nonconvex distortion functions. We will apply Lemmas \ref{prop:quasi-pi}, \ref{lm-generalmain2} and some results in \cite{DK82} to prove this lemma.

\begin{lemma}\label{lm-generalupdown}
Let $h\in\mathcal H^{\rm BV}$.
Suppose that $\mathcal M$ satisfies Assumption \ref{assm:1} and $I_h$ is p-quasi-convex on $\mathcal M$. The following three statements hold.
	\begin{itemize}
		\item[(i)]	If $h$ is nonconvex on $(0,1)$, then there is a point $\alpha\in(0,1)$ such that $h$ is %discontinuous at $\alpha$ and takes
		a constant on both intervals $(0,\alpha)$ and $(\alpha,1)$, and the constant values are different.
		
		\item[(ii)] If $h$ is nonconvex on $[0,1)$  and convex on $(0,1)$, then $h(0+)>h(0)=0$, $h(p)=h(0+)$ for all $p\in(0,1)$ and $h(1-)\le h(1)$.
		
		\item[(iii)] If $h$ is nonconvex on $(0,1]$ and convex on $(0,1)$, then $h(1-)>h(1)$, $h(p)=h(1-)$ for all $p\in(0,1)$ and $h(0+)\le h(0)=0$.		
	\end{itemize}
\end{lemma}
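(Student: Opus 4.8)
\emph{Overall strategy and first reductions.} The plan is to determine the admissible shapes of $h$ in three stages: pin down the global shape from quasi-convexity, localize all non-convexity to a single point using \cite{DK82}, and finally invoke the local-convexity inequality of Lemma \ref{lm-generalmain2} (supplemented, where needed, by the raw m-quasi-convexity of $I_h$ tested on three-point distributions) to force $h$ to be constant on each convex piece and to read off the boundary relations. First I would record the global shape: by Lemma \ref{prop:quasi-pi}, $h$ is quasi-convex on $[0,1]$, hence unimodal (decreasing up to a minimizing location, increasing thereafter), and $\pi(p,q)=(x-y)h(p)+(y-z)h(q)$ is quasi-convex on $T_2$.

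\emph{Localization of the non-convexity.} The structural input from \cite{DK82} on additively decomposed quasi-convex functions is that if $g_1(p)+g_2(q)$ is quasi-convex on a product of intervals, then at most one of $g_1,g_2$ is non-convex. I would apply this to $\pi$ restricted to any rectangle $[p_1,p_2]\times[q_1,q_2]\subseteq T_2$ (legitimate whenever $p_2\le q_1$), with $g_1=(x-y)h$ and $g_2=(y-z)h$. If $h$ were non-convex on two disjoint subintervals $J_1$ (to the left) and $J_2$ (to the right), then on $J_1\times J_2\subseteq T_2$ both factors of $\pi$ would be non-convex, contradicting \cite{DK82}. Thus the non-convexity of $h$ cannot occur on two disjoint intervals; an elementary nested-interval argument then shows that all non-convex subintervals share a common point, so a single location carries all the non-convexity. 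This produces exactly the trichotomy in the statement: the location is interior (case (i), giving convexity on $(0,\alpha)$ and $(\alpha,1)$), or the left endpoint $0$ (case (ii)), or the right endpoint $1$ (case (iii)). Since convexity on an open interval forces continuity, $h$ is continuous on each convex piece, so any genuine non-convexity at the special point must come from a jump.

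\emph{Jump structure and constancy.} At the special point I would extract the direction of the jump from convexity of sublevel sets. In case (ii), convexity on $[0,1)$ together with $h(0)=0$ would force $h(0+)\le 0$, so the hypothesised non-convexity on $[0,1)$ forces $h(0+)>h(0)=0$; moreover, for any level $c\in(0,h(0+))$ the set $\{h\le c\}$ contains $0$ but misses a right-neighbourhood of $0$, so convexity of this sublevel set forces $h\ge h(0+)$ on $(0,1)$, whence $h$ is convex and increasing from $h(0+)$ there (the analogous computation at $1$ gives case (iii), and across an interior $\alpha$ it identifies the jump in case (i)). To upgrade ``convex on each piece'' to ``constant on each piece,'' I would use Lemma \ref{lm-generalmain2}: a genuine jump produces a straddling interval whose index $\lambda_h$ approaches the extreme $-1/2$ (for instance $\lambda_h(0,q)\to -1/2$ as $q\to 0+$ in case (ii), since $h(0)=0$ and $h(0+)>0$), and then the compensation $\lambda_h(p,q)+\lambda_h(s,t)\ge 0$ forces every matched interval $[s,t]$ in the adjacent convex piece to have $\lambda_h(s,t)$ arbitrarily close to $+1/2$. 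As $\lambda_h(s,t)=1/2$ can hold only when $h$ is flat on the left half of $[s,t]$, varying the matched increment and position propagates this flatness and forces $h$ to be constant on the whole piece; the endpoint inequalities $h(1-)\le h(1)$ (resp. $h(0+)\le h(0)$) are then read off by testing m-quasi-convexity on three-point distributions whose atoms bring the boundary value into play alongside the now-constant interior value.

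\emph{The main obstacle.} The compensation argument only bites when the matching condition $|h(q)-h(p)|(x-y)=|h(t)-h(s)|(y-z)$ is solvable, i.e.\ when the convex piece spans an increment commensurable with the jump under the \emph{fixed} ratio $(x-y)/(y-z)$ afforded by Assumption \ref{assm:1}. When the oscillation of $h$ on the convex piece is too small for such a matched interval to exist, Lemma \ref{lm-generalmain2} is vacuous, and I expect to fall back on the raw inequality $I_h(\lambda F+(1-\lambda)G)\le\max\{I_h(F),I_h(G)\}$ with unmatched three-point distributions, exactly as in the proof of Lemma \ref{lm-generalmain2}, to close the gap. Organising this case analysis cleanly across the three locations (interior, left, right) and across the two possible sides of the jump, while carefully tracking the discontinuities of $h$ and the small-oscillation regime, is where the genuine work lies, and this is the step I expect to be the hardest.
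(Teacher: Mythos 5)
Your localization step (one non-convexity location, via quasi-convexity of $\pi$ on product rectangles and \cite{DK82}) matches the paper's Step~1 for part (i), and your identification of the jump directions at the endpoints in (ii)/(iii) is sound. However, the constancy step --- the heart of the lemma --- has two genuine gaps. First, in part (i) your premise that ``any genuine non-convexity at the special point must come from a jump'' is false: $h$ can be continuous on all of $(0,1)$, convex on $(0,\alpha)$ and on $(\alpha,1)$, and nonconvex only through a concave corner with $h'_-(\alpha)>h'_+(\alpha)$ (e.g.\ two increasing linear pieces with decreasing slope). Your mechanism, which needs $\lambda_h\to-1/2$ on intervals straddling a jump, never gets started there. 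The paper's Step~2 handles exactly this configuration: assuming $h$ nonconstant on a convex piece, it first derives continuity of $h$ on $(0,1)$ from \citet[Theorem 1]{DK82}, then shows $\lambda_h(\alpha-1/n,\alpha+1/n)$ is eventually bounded above by the negative constant $(K_R-K_L)/(4\max\{|K_L|,|K_R|\})$, while $\lambda_h(s,t_0)\to0$ at a point $t_0$ of nonzero one-sided derivative in the nonconstant piece; since \emph{both} increments shrink to zero, the matching condition of Lemma \ref{lm-generalmain2} is always solvable by continuity, and the compensation inequality is violated.

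Second, in parts (ii) and (iii) you correctly identify the obstacle --- the jump contributes a \emph{fixed} increment $|h(q)-h(p)|\approx h(0+)$, so a matched interval with $|h(t)-h(s)|=\frac{x-y}{y-z}|h(q)-h(p)|$ need not exist inside the convex piece --- but you leave it unresolved (``fall back on the raw inequality \dots\ is where the genuine work lies''). Since this is precisely the case that must be closed to conclude constancy, the proof is incomplete as written. Note that the paper does not use Lemma \ref{lm-generalmain2} at all for (ii)/(iii): it picks $\beta$ where $h$ is strictly monotone on the convex piece, sets $p_1=0$, $p_2=2\epsilon$, $q_1=\beta+2k\epsilon$, $q_2=\beta$ with $k=2(x-y)/(y-z)$, and checks directly that $\pi\bigl(\tfrac{p_1+p_2}{2},\tfrac{q_1+q_2}{2}\bigr)>\max\{\pi(p_1,q_1),\pi(p_2,q_2)\}$ for small $\epsilon$ --- the jump at $0$ forces $A>A_1$ and convexity plus strict monotonicity near $\beta$ forces $A_2<A$ --- contradicting Lemma \ref{prop:quasi-pi} with no commensurability requirement. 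You would need either this direct construction or a concrete substitute for the small-oscillation regime; the propagation-of-flatness claim (``$\lambda_h(s,t)$ arbitrarily close to $+1/2$ forces $h$ constant on the whole piece'') is also only sketched and would need to be made precise even when matched intervals do exist.
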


\begin{proof}
	%	We consider the following three cases in turn to veirfy this result: \underline{Case 1}. $h$ is nonconcave on $(0,1)$; \underline{Case 2}. $h$ is nonconcave at $0$ and concave on $(0,1)$;
	%	\underline{Case 3}. $h$ is nonconcave at $1$ and concave on $(0,1)$.\\
	%\underline{Case 1}:
	%	For $p_1,p_2\in[0,1]$ and $p_1+p_2\le 1$, we have 
	%	\begin{align*}
		%		I_h(p_1\delta_x+p_2\delta_y+(1-p_1-p_2)\delta_z)&=xh(p_1)+y(h(p_1+p_2)-h(p_1))+z(h(1)-h(p_1+p_2))\\
		%		&=zh(1)+(x-y)h(p_1)+(y-z)h(p_1+p_2).
		%	\end{align*}
	%	Denote by $p=p_1$ and $q=p_1+p_2$, and define a bivariate function: 
	%	\begin{align}\label{eq-biquasifunction}
		%		\pi(p,q)= (x-y)h(p)+(y-z)h(q), ~~0\le p\le q\le 1.
		%	\end{align}
	By Lemma \ref{prop:quasi-pi}, the p-quasi-convexity of $I_h$ implies the quasi-convexity of $\pi$ defined in \eqref{eq-biquasifunction}. Hence, it is sufficient to prove this lemma based on the quasi-convexity of $\pi$.

\underline{(i)} We prove this statement in two steps (see Figure \ref{fig-mainlm1} for an illustration). First, we will show that there is at most one nonconvexity kink of $h$ on $(0,1)$. Second, we will show that $h$ should be a constant both on the left and the right of the nonconvexity kink.
\begin{figure}
{\centering
		\begin{tikzpicture}
			\draw[-] (0,5) -- (0,0) -- (8,0);
			\draw [black] (0,0) .. controls (1,0.05) and (3,0.4).. (4,2.5);
			\draw [black] (4,2.5) .. controls (5,2.6) and (6,3).. (7,4.5);
			\draw  [dash pattern={on 1.5pt off 2pt}] (3.5,1.7) -- (4.5,2.58);
\draw  [dash pattern={on 0.84pt off 2.51pt}] (3.45,1.2) -- (4,2.5)--(4.55,3.8)
node[above] {\footnotesize Slope is $K_L$};
\draw  [dash pattern={on 0.84pt off 2.51pt}] (2,2.2)--(4,2.5)--(6,2.8)
node[right] {\footnotesize Slope is $K_R$};			
			
\node at (4,2.5) { $\bigcdot$};	%alpha
\node at (3.3,2.85) {\footnotesize $(\alpha,h(\alpha))$};

\node at (2.5,0.75) { $\bullet$};  %t0
\node at (1.85,1.1) {\footnotesize $(t_0,h(t_0))$};

\node at (1.8,0.4) {$\bullet$};  %sn
\node at (1,0.65) {\footnotesize $(s_n,h(s_n))$};

\node at (3.5,1.7) {$\bullet$};  %pn
\node at (2.6,1.9) {\footnotesize $(p_n,h(p_n))$};

\node at (4.5,2.58) {$\bullet$};  %qn
\node at (5.5,2.25) {\footnotesize $(q_n,h(q_n))$};

\node at (4,2.12) {$\bullet$};  

\draw[very thick, dashed, <-](4.05,2.05)--(4.5,1.5)
node[right] {\footnotesize $\left(\alpha,\frac{h(p_n)+h(q_n)}{2}\right)$};
\end{tikzpicture} 
~\\[5pt]
}
\begin{center}
\footnotesize{ 
Step 1. There exists only one nonconvexity kink $\alpha$; $h$ is convex on $[0,\alpha)$ and on $(\alpha,1]$.
\\ Step 2. Contradiction from being nonconstant on $(0,\alpha)$. Here, $p_n=\alpha-1/n$,  $q_n=\alpha+1/n$, 
\\
$|h(t_0)-h(s_n)|(x-y)=|h(p_n)-h(q_n)|(y-z)$,  $K_{L}=h'_-(\alpha)$ and $K_{R}=h'_+(\alpha)$.
}
\end{center}
\caption{Graphic illustration of the proof of Lemma \ref{lm-generalupdown} (i).}\label{fig-mainlm1}
\end{figure}

Step 1: Suppose that $h$ is nonconvex on $(a,b)\subseteq(0,1)$,
	and we will show that there is at most one nonconvexity kink on $(a,b)$. To see this,  
	%for $p,q\in[0,1]$ and $p+q\le 1$, we have 
	%\begin{align*}
	%I_h(p\delta_x+q\delta_y+(1-p-q)\delta_z)&=xh(p)+y(h(p+q)-h(p))+z(h(1)-h(p+q))\\
	%&=zh(1)+(x-y)h(p)+(y-z)h(p+q).
	%\end{align*}
	%Denote by $p_1=p$ and $p_2=p+q$, and define a bivariate function: 
	%\begin{align}\label{eq-biquasifunction}
	%\pi(p_1,p_2)= (x-y)h(p_1)+(y-z)h(p_2), ~~0\le p_1\le p_2\le 1.
	%\end{align}
	%One can verify that the p-quasi-concavity of $I_h$ implies the quasi-concavity of $\pi$. 
	%Hence, 
	note that the function $\pi$ defined in \eqref{eq-biquasifunction} is quasi-convex both on $(0,a)\times(a,b)$ and  $(a,b)\times(b,1)$. Since $h$ is nonconvex on $(a,b)$, it follows from \citet[Theorem 2]{DK82}
	that $h$ is convex both on $(0,a)$ and $(b,1)$. Applying this result to smaller and smaller  subintervals of $(a,b)$, there is one point $\alpha\in(a,b)$ (the nonconvexity kink) such that $h$ is convex both on $(0,\alpha)$ and $(\alpha,1)$. 
	
Step 2: We aim to verify that $h$ is 
	%discontinuous at $\alpha$ and takes 
	a constant on both intervals $(0,\alpha)$ and $(\alpha,1)$, and this will complete the proof of (i). 
	%Assume now by contradiction that $h$ is nonconstant on intervals $(0,\alpha)$ or $(\alpha,1)$. 
	We only show that $h$ is a constant on $(0,\alpha)$ as the proof of the other case is similar. Assume now by contradiction that $h$ is nonconstant on $(c,d)\subseteq (0,\alpha)$. Let us consider the bivariate function $\pi$ (see \eqref{eq-biquasifunction}) on $(c,d)\times(d,1)$. Since $\alpha$ is a nonconvexity kink, we have that $h$ is nonconstant on $(d,1)$. Hence, $h$ is nonconstant both on $(c,d)$ and $(d,1)$. Note that $\pi$ is quasi-convex. It follows from
	\citet[Theorem 1]{DK82} that $h$ is continuous on $(d,1)$. Since $\alpha\in(d,1)$ and $h$ is convex on both $(0,\alpha)$ and $(\alpha,1)$, it holds that $h$ is continuous on $(0,1)$. 
	Now, recall $\lambda_h(p,q)$ defined in \eqref{eq-lambda_h}, and we 
	rewrite   it as
\begin{align}\label{eq-relambda_h}
		\lambda_h(p,q)=\frac{h(p)/2+h(q)/2-h(p/2+q/2)}{|h(q)-h(p)|}
		=\frac{f(p,q)-g(p,q)}{2|f(p,q)+g(p,q)|},
%	\frac{1}{4\left|{(h(q)-h(p))}/{(q-p)}\right|}\left(\frac{h(q)-h(p/2+q/2)}{(q-p)/2}
%		-\frac{h(p/2+q/2)-h(p)}{(q-p)/2}\right),~~p<q.
\end{align}
where
\begin{align*}
f(p,q)=\frac{h(q)-h(p/2+q/2)}{(q-p)/2}~~{\rm and}~~g(p,q)=\frac{h(p/2+q/2)-h(p)}{(q-p)/2}.
\end{align*}
By the convexity of $h$ on $(0,\alpha)$ and also noting that $h$ is nonconstant on $(0,\alpha)$, there exists $t_0$ in $(0,\alpha)$ such that $h$ has a nonzero left derivative at $t_0$. Therefore, it follows from \eqref{eq-relambda_h} that 
	\begin{align}\label{eq-leftdert_0}
		\lambda_h(s,t_0)\to0~~ {\rm as}~ s\uparrow t_0.
	\end{align}
	Denote  $K_{L}=h'_-(\alpha)$ and $K_{R}=h'_+(\alpha)$ by the left and right derivative of $h$ at $\alpha$. Since $\alpha$ is a point of nonconvex kink, we have $\infty\ge K_{L}>K_{R}\ge -\infty$. 	We assert that $K_LK_R\ge 0$, and this implies $|K_L+K_R|>0$. Otherwise, we have $K_L>0$ and $K_R<0$ which implies $h$ is not quasi-convex on $[0,1]$, and this contradicts to Lemma \ref{prop:quasi-pi}.
	Take $p_n=\alpha-1/n$ and $q_n=\alpha+1/n$ for $n\in\N$. 
Below we aim to verify the assertion that 
\begin{align}\label{eq-noncvkink}
\lim_{n\to\infty}\lambda_h(p_n,q_n)<0.
\end{align}
%	We show that $h(p_n)\neq h(q_n)$ for large enough $n$. To see this, if $K_L\le 0$, then $K_{R}<0$ which means that $h(p_n)> h(q_n)$ for large enough $n$. Suppose now $K_L>0$. By Lemma \ref{prop:quasi-pi}, $h$ is quasi-convex on $[0,1]$, and hence, we have $K_L>K_R\ge 0$. This implies $h(p_n)< h(q_n)$ for large enough $n$. Therefore, we conclude that $h(p_n)\neq h(q_n)$ for large enough $n$.
To see this,
by \eqref{eq-relambda_h}, we have
\begin{align*}
		\lambda_h(p_n,q_n)
		=\frac{a_n-b_n}{2|a_n+b_n|},
%		&=\frac{1}{4\left|{(h(q_n)-h(p_n))}/{(q_n-p_n)}\right|}\left(\frac{h(q_n)-h(p_n/2+q_n/2)}{(q_n-p_n)/2}
%		-\frac{h(p_n/2+q_n/2)-h(p_n)}{(q_n-p_n)/2}\right)\\
%		&=\frac{1}{4\left|{(h(q_n)-h(p_n))}/{(q_n-p_n)}\right|}\left[n\left(h\left(\alpha+1/n	\right)-h(\alpha)\right)-n\left(h(\alpha)-h\left(\alpha-1/n\right)\right)\right].
	\end{align*}
where
\begin{align*}
	a_n=n\left(h\left(\alpha+1/n	\right)-h(\alpha)\right)~~{\rm and}~~b_n=n\left(h(\alpha)-h\left(\alpha-1/n\right)\right).
\end{align*}
Note that $a_n\to K_R$ and $b_n\to K_L$ as $n$ tends to infinity. If  $\infty>K_L>K_R>-\infty$, then
\begin{align*}
\lambda_h(p_n,q_n)
=\frac{a_n-b_n}{2|a_n+b_n|}\to \frac{K_R-K_L}{2|K_R+K_L|}<0.
\end{align*}
%	Note that $(h(q_n)-h(p_n))/(q_n-p_n)$ is the slope of the line through $(p_n,h(p_n))$ and $(q_n,h(q_n))$. It holds that $(h(q_n)-h(p_n))/(q_n-p_n)\in[K_R,K_L]$ for large enough $n$ (see Figure \ref{fig-mainlm1}). Hence, when $n$ is large enough, we have
%	\begin{align}\label{eq-noncvkink}
%		\lambda_h(p_n,q_n)&\le \frac{1}{4\max\{|K_L|,|K_R|\}}\left[n\left(h\left(\alpha+1/n	\right)-h(\alpha)\right)-n\left(h(\alpha)-h\left(\alpha-1/n\right)\right)\right]\notag\\
%		&\to \frac{K_R-K_L}{4\max\{|K_L|,|K_R|\}}<0,~~{\rm as}~n\to\infty.
%	\end{align}
If $K_L=+\infty$ and $K_R\in\R$, then 
\begin{align*}
\lambda_h(p_n,q_n)
=\frac{a_n-b_n}{2|a_n+b_n|}=\frac{a_n/b_n-1}{2|a_n/b_n+1|}\to -\frac{1}{2}.
\end{align*}
If $K_L\in\R$ and $K_R=-\infty$, then 
\begin{align*}
	\lambda_h(p_n,q_n)
	=\frac{a_n-b_n}{2|a_n+b_n|}=\frac{1-b_n/a_n}{2|1+b_n/a_n|}\to -\frac{1}{2}.
\end{align*}
The case of $K_L=\infty$ and $K_R=-\infty$ cannot happen as $h$ should be quasi-convex on $[0,1]$ by Lemma \ref{prop:quasi-pi}.
Therefore, we have verified \eqref{eq-noncvkink}.
%We show that $h(p_n)\neq h(q_n)$ for large enough $n$. To see this, if $K_L\le 0$, then $K_{R}<0$ which means that $h(p_n)> h(q_n)$ for large enough $n$. Suppose now $K_L>0$. By Lemma \ref{prop:quasi-pi}, $h$ is quasi-convex on $[0,1]$, and hence, we have $K_L>K_R\ge 0$. This implies $h(p_n)< h(q_n)$ for large enough $n$. Therefore, we conclude that $h(p_n)\neq h(q_n)$ for large enough $n$.
Recall the definition of $t_0$ in \eqref{eq-leftdert_0}. Define a sequence $\{s_n\}_{n\in\N}$ such that $s_n< t_0$ for all $n\in\N$, $s_n\to t_0$ and $|h(t_0)-h(s_n)|(x-y)=|h(p_n)-h(q_n)|(y-z)$ for large enough $n$. Because $h$ is continuous on $(0,1)$, such a sequence  exists. Noting that $t_0<\alpha$, we have $s_n<t_0<p_n<q_n$ for large enough $n$.
	By Lemma \ref{lm-generalmain2}, we obtain $\lambda_h(s_n,t_0)+\lambda_h(p_n,q_n)\ge 0$ for large enough $n$.
	However, by combining  \eqref{eq-leftdert_0} and \eqref{eq-noncvkink}, we have $\lambda_h(s_n,t_0)+\lambda_h(p_n,q_n)<0$ for large enough $n$, and this yields a contradiction.
	Therefore, we conclude that $h$ should be a constant on both intervals $(0,\alpha)$ and $(\alpha,1)$, and this completes the proof of (i).
	
	\underline{(ii)}: Suppose that $h$ is nonconvex on $[0,1)$ and convex on $(0,1)$. This implies that $h(0+)>h(0)=0$.  It follows from Lemma \ref{prop:quasi-pi} that $h$ is quasi-convex on $[0,1]$. Combining this   with $h(0+)>h(0)$, it holds that $h$ is increasing on $[0,1]$, which implies $h(1-)\le h(1)$.
	It remains to verify that $h(p)=h(0+)$ for all $p\in(0,1)$. 
	%	By Proposition \ref{prop:quasi-pi}, we have $h$ is quasi-concave on $[0,1]$. Also noting that $h(0+)<h(0)$, we conclude that $h$ is decreasing on $[0,1)$.
	Assume now by contradiction that $h$ is nonconstant on $(0,1)$. Since $h$ is convex on $(0,1)$, there exists $\beta\in(0,1)$ such that $h$ is strictly increasing on $[\beta,1)$. Let $\epsilon>0$ and $k=2(x-y)/(y-z)$, and denote by $p_1=0$, $p_2=2\epsilon$, $q_1=\beta+2k\epsilon$ and $q_2=\beta$ such that $p_2\le q_2$ and $q_1<1$ (see the left panel of Figure \ref{fig-mainlm2}).
	 \begin{figure}
		\centering
		\begin{tikzpicture}
			\draw[-] (0,3.5) -- (0,0) -- (6,0);
			\draw [black] (0.02,0.5) .. controls (2,0.4) and (3,1.2).. (5,3.25);
			
			\node at (0,0) {$\bullet$};
			\node at (0,0.5) {$\circ$};
			
			\node[below] at (-0.2,0)  {\footnotesize $0$ $(p_1)$}; %p_1

			\node[below] at (3,0) {\footnotesize $\beta$ $(q_2)$};    %q_2
			\draw[gray,dotted] (3,0) -- (3,1.35);
			
			\node[below] at (1,0) {\footnotesize $2\epsilon$ $(p_2)$}; %p_2
			\draw[gray,dotted] (1,0) -- (1,0.55);
			
			\node[below] at (4.5,0) {\footnotesize $\beta+2k\epsilon$ $(q_1)$}; %q_1
			\draw[gray,dotted] (4.5,0) -- (4.5,2.78);
			
			\node at (0.5,0.46) {$\smallcdot$};
			\draw[very thick, dashed, <-](0.5,0.55)--(0.5,1.1)
			node at (0.75,1.45) {\footnotesize $\frac{p_1+p_2}{2}=\epsilon$};
			
			\node at (3.75,2) {$\smallcdot$};
			\draw[very thick, dashed, <-](3.72,2.1)--(3.4,2.5)
			node at (2.8,2.82) {\footnotesize $\frac{q_1+q_2}{2}=\beta+k\epsilon$};
		\end{tikzpicture} 
		~~~		
		\begin{tikzpicture}
			\draw[-] (0,-3.5) -- (0,0) -- (6,0);
			\draw [black] (0,0) .. controls (2,-2.45) and (3,-2.75).. (4.98,-2.65);
			\node at (5,-3.25) {$\bullet$};
			\node at (5,-2.65) {$\circ$};

			\node[above] at (5.4,0) {\footnotesize $1$ $(q_1)$};
			\draw[gray,dotted] (5,0) -- (5,-3.25);
			
			\node[above] at (4,0) {\footnotesize $1-2\epsilon$ $(q_2)$};
			\draw[gray,dotted] (3.8,0) -- (3.8,-2.63);
			
			\node[above] at (2.5,0) {\footnotesize $\beta$ $(p_2)$};
			\draw[gray,dotted] (2.5,0) -- (2.5,-2.25);
			
			\node[above] at (1,0) {\footnotesize $\beta-2k\epsilon$ $(p_1)$};
			\draw[gray,dotted] (1,0) -- (1,-1.1);

			\node at (4.4,-2.7) {$\smallcdot$};
			\draw[very thick, dashed, <-](4.4,-2.73)--(4.4,-3.25)
			node at (4.3,-3.6) {\footnotesize $\frac{q_1+q_2}{2}=1-\epsilon$};
			
			\node at (1.75,-1.83) {$\smallcdot$};
			\draw[very thick, dashed, <-](1.72,-1.85)--(1.35,-2.4)
			node at (1.37,-2.7) {\footnotesize $\frac{p_1+p_2}{2}=\beta-k\epsilon$};
			
		\end{tikzpicture} 
		\caption{Contradiction from $h$ being nonconstant on $(0,1)$ in the proof of (ii) and (iii) of Lemma \ref{lm-generalupdown}: the Left panel illustrates (ii) and the right panel illustrates (iii).}\label{fig-mainlm2}
	\end{figure}
We calculate the following items: 
	$$
	A_1:=\pi(p_1,q_1)=(y-z)h(\beta+2k\epsilon),
	$$
	$$
	A_2:=\pi(p_2,q_2)=(x-y)h(2\epsilon)+(y-z)h(\beta),
	$$ 
	and  
	$$
	A:=\pi\left(\frac{p_1+p_2}{2},\frac{q_1+q_2}{2}\right)=(x-y)h\left(\epsilon\right)+(y-z) h\left(\beta+k\epsilon\right).
	$$
	On the one hand,
	since $h(\epsilon)>h(0+)>0$ for any $\epsilon\ge 0$ and $h$ is continuous on $(0,1)$, we have $A>A_1$ for small enough $\epsilon>0$. On the other hand, 
	\begin{align*}
		A_2-A&=(x-y)\left(h(2\epsilon)-h\left(\epsilon\right)\right)
		-(y-z)(h(\beta+k\epsilon)-h(\beta))\\
		&=\epsilon(x-y)\left(\frac{h(2\epsilon)-h(\epsilon)}{\epsilon}-\frac{2(h(\beta+k\epsilon)-h(\beta))}{k\epsilon} \right)
		<0,
	\end{align*}
	where the inequality holds because $h$ is convex on $(0,1)$ and strictly decreasing on $[\beta,1)$. Therefore, we conclude that $A>\max\{A_1,A_2\}$ which contradicts the quasi-convexity of $\pi$. 
	This completes the proof of (ii).

	\underline{(iii)}: The proof of this statement is similar to (ii). Suppose that $h$ is nonconvex on $(0,1]$ and convex on $(0,1)$. This implies $h(1-)>h(1)$. By Lemma \ref{prop:quasi-pi}, we know that $h$ is quasi-convex on $[0,1]$. Combining with $h(1-)>h(1)$, we conclude that $h$ is decreasing on $[0,1]$. Hence, we have $h(0+)\le h(0)$. It remains to verify that $h$ is a constant on $(0,1)$.
	Assume now by contradiction that $h$ is nonconstant on $(0,1)$. Then, there exists $\beta\in(0,1)$ such that $h$ is strictly decreasing on $(0,\beta]$. Let $\epsilon>0$ and $k=2(y-z)/(x-y)$, and denote by $p_1=\beta-2k\epsilon$, $p_2=\beta$, $q_1=1$ and $q_2=1-2\epsilon$ such that $p_2\le q_2$ and $p_1>0$ (see the right panel Figure \ref{fig-mainlm2}). We calculate the following items: 
	$$
	A_1:=\pi(p_1,q_1)=(x-y)h(\beta-2k\epsilon)+(y-z)h(1),
	$$
	$$
	A_2:=\pi(p_2,q_2)=(x-y)h(\beta)+(y-z)h(1-2\epsilon),
	$$
	and 
	$$
	A:=\pi\left(\frac{p_1+p_2}{2},\frac{q_1+q_2}{2}\right)=(x-y)h\left(\beta-k\epsilon\right)+(y-z) h\left(1-\epsilon\right).
	$$
	On the one hand, since $h(1-\epsilon)\ge h(1-)>h(1)$ for any $\epsilon>0$ and $h$ is continuous on $(0,1)$, we have $A>A_1$ for small enough $\epsilon>0$. On the other hand, 
	\begin{align*}
		A_2-A&=(x-y)(h(\beta)-h(\beta-k\epsilon))-(y-z)(h(1-\epsilon)-h(1-2\epsilon))\\
		&=\epsilon(y-z)\left(\frac{2(h(\beta)-h(\beta-k\epsilon))}{k\epsilon}-\frac{h(1-\epsilon)-h(1-2\epsilon)}{\epsilon}\right)<0,
	\end{align*}
	where the inequality holds because $h$ is convex on $(0,1)$ and strictly increasing on $(0,\beta]$.
	Therefore, we conclude that $A>\max\{A_1,A_2\}$ which implies $\pi$ is not quasi-convex, a contradiction. Hence, we complete the proof of (iii). 
\end{proof}

In the following, we give the proof of (ii) $\Rightarrow$ (i) in Theorem \ref{th-main}, and this will complete the proof of Theorem \ref{th-main}.

\begin{proof}[Proof of (ii) $\Rightarrow$ (i) in Theorem \ref{th-main}]
%The implications (i) $\Rightarrow$ (ii) and (i) $\Rightarrow$ (iii) are shown in Proposition \ref{prop:trivial}. Let us now prove (ii) $\Rightarrow$ (i) and (iii) $\Rightarrow$ (i) below. 
%
%(ii) $\Rightarrow$ (i): 
It suffices to verify that for nonconvex $h\in\mathcal H^{\rm BV}$, the  p-quasi-convexity of $I_h$ implies $h\in\mathcal H^{\rm QSM}$. To see this, %it follows from Lemma %\ref{prop:quasi-pi} that $h$ is quasi-convex on $[0,1]$, and hence, if $h$ is convex on $[0,1)$ or $(0,1]$, then $h$ must be convex on $[0,1]$. Therefore,
we divide $h$ into three cases as shown in Lemma \ref{lm-generalupdown}: Case 1. $h$ is nonconvex on $(0,1)$; Case 2. $h$ is nonconvex on $[0,1)$ and convex on $(0,1)$; Case 3. $h$ is nonconvex on $(0,1]$ and convex on $(0,1)$.

By Lemma \ref{lm-generalupdown}, if $h$ is the form of Case 2 or Case 3, then one can check that $h\in\mathcal H^{\rm QSM}$. Thus, it remains to consider Case 1. 

Suppose that $h$ is nonconcave on $(0,1)$.
By Lemma \ref{lm-generalupdown} (i), there exists $\alpha\in(0,1)$ such that $h$ is a constant both on $(0,\alpha)$ and $(\alpha,1)$ and $h(\alpha-)\neq h(\alpha+)$. Hence, $h$ can be represented as
$$
h(p)=
h(0+)\id_{\{0<p<\alpha\}}+h(\alpha)\id_{\{p=\alpha\}}
+h(1-)\id_{\{\alpha<p<1\}}+h(1)\id_{\{p=1\}},~~p\in[0,1].
$$
In order to show that $h\in\mathcal H^{\rm QSM}$, we need to verify two assertions: (a) $h(0+)\le 0$ and $h(1-)\le h(1)$; (b) $(h(\alpha)-h(0+))(h(1-)-h(\alpha))\ge 0$. Both assertions will be proved by counter-evidence. For (a), we assume by contradiction that $h(0+)>0$ or $h(1-)>h(1)$. We only consider the case of $h(0+)>0$ as the case of $h(1-)>h(1)$ is similar.
By Lemma \ref{prop:quasi-pi}, we know that $h$ is quasi-convex, and combining with $h(0+)>0$, it holds that $h$ is increasing on $[0,1]$ which implies $h(0+)=h(\alpha-)<h(\alpha+)=h(1-)$. 
Let $p_1=0$, $p_2=\alpha/2$, $q_1=\alpha+2\epsilon$ and $q_2=\alpha-\epsilon$ such that $\epsilon>0$, $p_2\le q_2<q_1<1$ (see Figure \ref{fig-mainth1}).   \begin{figure}
	\centering
\begin{tikzpicture}
		\draw[-] (0,5) -- (0,0) -- (8,0);
		\draw [black] (0.02,2)--(4.47,2);
		\draw [black] (4.52,4)--(7,4);
		
\node at (0,0) {$\bullet$};
\node at (0,2) {$\circ$};
\node at (4.5,2) {$\circ$};
\node at (4.5,3) {$\bullet$};
\node at (4.5,4) {$\circ$};

\node at (-0.2,-0.3) {\footnotesize $0$ $(p_1)$};

\node at (2,-0.3) {\footnotesize $\frac{\alpha}{2}$ $(p_2)$};
\draw[gray,dotted] (2,0) -- (2,2);

\node at (3.5,-0.3) {\footnotesize $\alpha-\epsilon$ $(q_2)$};
\draw[gray,dotted] (3.5,0) -- (3.5,2);

\node at (4.5,-0.3) {\footnotesize $\alpha$};
\draw[gray,dotted] (4.5,0) -- (4.5,4);

\node at (6,-0.3) {\footnotesize $\alpha+2\epsilon$ $(q_1)$};
\draw[gray,dotted] (6,0) -- (6,4);

\node at (1,1.96) {$\smallcdot$};
\draw[very thick, dashed, <-](1,2.05)--(1,2.55)
node at (1,2.85) {\footnotesize $\left(\frac{\alpha}{4},h(0+)\right)$};

%\node at (1,2.3) {\footnotesize $\left(\frac{\alpha}{4},h(0+)\right)$};

\node at (4.75,3.96) {$\smallcdot$};
\draw[very thick, dashed, <-](4.75,4.05)--(4.75,4.6)
node at (5,4.9) {\footnotesize $\left(\alpha+\frac{\epsilon}{2},h(1-)\right)$};

	\end{tikzpicture} 
	\caption{Contradiction from $h(0+)>0$. }\label{fig-mainth1}
\end{figure}
Recall the function $\pi$ defined in \eqref{eq-biquasifunction}, we have  
\begin{align*}
	\pi(p_1,q_1)=(x-y)h(0)+(y-z)h(1-),~~\pi(p_2,q_2)=(x-z)h(0+),
\end{align*} 
and 
$$
\pi\left(\frac{p_1+p_2}{2},\frac{q_1+q_2}{2}\right)
=(x-y)h(0+)+(y-z)h(1-).
$$
One can check that
$
\pi\left((p_1+p_2)/{2},(q_1+q_2)/{2}\right)>\max\{\pi(p_1,q_1),\pi(p_2,q_2)\}
$
which implies $\pi$ is not quasi-convex, and this contradicts Lemma \ref{prop:quasi-pi}. Hence, we have $h(0+)\ge 0$. For (b), we assume by contradiction that $h(\alpha)<\min{\{h(0+),h(1-)\}}$ or $h(\alpha)>\max{\{h(0+),h(1-)\}}$. The case of $h(\alpha)>\max{\{h(0+),h(1-)\}}$ contradicts Lemma \ref{prop:quasi-pi} as $h$ should be quasi-convex. If $h(\alpha)<\min{\{h(0+),h(1-)\}}$,
let $p_1=\alpha/2$, $p_2=q_1=\alpha$ and $q_2=(1+\alpha)/2$ (see Figure \ref{fig-mainth2}).  
 \begin{figure}
	\centering
	\begin{tikzpicture}
		\draw (0,-3) -- (0,2);
		\draw (0,0) -- (8,0);
		\draw [black] (0.02,-1.5)--(3.97,-1.5);
		\draw [black] (4.02,1)--(7,1);
		
		\node at (0,0) {$\bullet$};
		\node at (0,-1.5) {$\circ$};
		\node at (4,-1.5) {$\circ$};
		\node at (4,-2.5) {$\bullet$};
		\node at (4,1) {$\circ$};
		
		\node at (-0.2,-0.3) {\footnotesize $0$};
		
		\node at (2,0.3) {\footnotesize $\frac{\alpha}{2}$ $(p_1)$};
\draw[gray,dotted] (2,0) -- (2,-1.5);
		
		\node at (4,-0.3) {\footnotesize $\alpha$ $(p_2, q_1)$};
\draw[gray,dotted] (4,-2.5) -- (4,1);
		
		\node at (5.5,-0.3) {\footnotesize $\frac{1+\alpha}{2}$ $(q_2)$};
\draw[gray,dotted] (5.5,0) -- (5.5,1);

		\node at (3,-1.53) {$\smallcdot$};
		\draw[very thick, dashed, <-](3,-1.55)--(3,-2.1)
		node at (3,-2.4) {\footnotesize $\left(\frac{3\alpha}{4},h(0+)\right)$};
		
	   \node at (4.75,1) {$\smallcdot$};
		\draw[very thick, dashed, <-](4.75,1.07)--(4.75,1.6)
		node at (4.75,1.9) {\footnotesize $\left(\frac{3\alpha}{4}+\frac{1}{4},h(1-)\right)$};
	\end{tikzpicture} 
	\caption{Contradiction from $h(\alpha)<\min\{h(0+),h(1-)\}$. }\label{fig-mainth2}
\end{figure}
 We have 
$$
\pi(p_1,q_1)=(x-y)h(0+)+(y-z)h(\alpha),~~\pi(p_2,q_2)=(x-y)h(\alpha)+(y-z)h(1-)
$$ 
and 
$$
\pi\left(\frac{p_1+p_2}{2},\frac{q_1+q_2}{2}\right)=(x-y)h(0+)+(y-z)h(1-).
$$ 
One can check that
$
\pi\left((p_1+p_2)/{2},(q_1+q_2)/{2}\right)>\max\{\pi(p_1,q_1),\pi(p_2,q_2)\}
$, which yields a contradiction to Lemma \ref{prop:quasi-pi}. Hence, we conclude that if $h$ is the form of Case 1, then $h\in\mathcal H^{\rm QSM}$. This completes the proof. 
\end{proof}

%\begin{proof}[Proof of Theorem \ref{th-I_h}]
%The implications (i) $\Rightarrow$ (ii) and (i) $\Rightarrow$ (iii) are shown in Proposition \ref{prop:trivial}. Let us now prove (ii) $\Rightarrow$ (i) and (iii) $\Rightarrow$ (i) below. 
%
%(ii) $\Rightarrow$ (i): It follows immediately from Theorem \ref{th-main} as Assumption \ref{assm:1} is stronger than Assumption \ref{assm:1}.
%
%
%(iii) $\Rightarrow$ (i): Suppose that (iii) holds.
%By Proposition \ref{prop-transform}, (iii) implies p-quasi-convexity of $I_h$ on $\mathcal M^v=\{F\circ v^{-1}: F\in\mathcal M\}$ which satisfies Assumption \ref{assm:1}. It then follows from Theorem \ref{th-main} that (i) holds. This completes the proof.
%\end{proof}

%\begin{proof}[Proof of Corollary \ref{co-increasing}]
%The result follows immediately from Theorem \ref{th-main} and the fact that increasing $h\in\mathcal H^{\rm QCX}$ are the ones shown in Corollary \ref{co-increasing}.
%\end{proof}

\begin{proof}[Proof of Corollary \ref{coro:gen}]
	The ``if" statement is straightforward to verify. 
	To show the ``only if" statement, we first note that, 
	since $\mathcal M'$ contains $\mathcal M$ which satisfies Assumption \ref{assm:1}, it follows from Theorem \ref{th-main} that $h\in\mathcal H^{\rm QCX}$. 
	Since the distribution of $X$ is in $\mathcal M'$, and $X$ is unbounded from below and above, we know that $h\in \H^{\rm BV}$ must be continuous at $0$ and $1$;  otherwise $Q_1(v(X))=\infty$ and $Q_0(v(X))=-\infty$ would lead to $R_{h,v}(X) \not \in \R$. This continuity condition for $h\in\mathcal H^{\rm QCX}$ leads to the   cases stated in the corollary.
\end{proof}

\begin{proof}[Proof of Corollary \ref{coro:gen2}] 
	Based on Corollary \ref{coro:gen}, it suffices to show that mixed quantiles are not 
	norm-continuous.  To verify this fact, take $\alpha \in(0,1)$ and $c\in [0,1]$, and a uniform random variable $U$ on $[0,1]$. 
	Let $X_n=a\id_{\{U\in [\alpha-\epsilon,\alpha+\epsilon]\}}+ b\id_{\{U> \alpha+\epsilon\}}  $ and $X=b\id_{\{U> \alpha+\epsilon\}}  $ 
	for some $0<a<b$ with $a\ne cb$.
	It is clear that $Q_\alpha^c(X ) =cb$, $Q_\alpha^c(X_n)  = a $ for each $n\in \N$, and $X_n\to X  $ in $L^q$. This example justifies the non-continuity of $Q_\alpha^c$.
\end{proof}

\subsection{P-quasi-concavity and p-quasi-linearity}

A considerable convenience to working with non-monotone $h\in \H^{\rm BV}$ is that we can put a negative sign in front of $h$ without leaving the class, which is  not the case for $h\in \H^{\rm DT}$. 
In particular, we have 
$R_{h,v}=-R_{-h,v}$ for $h\in\mathcal H^{\rm BV}$. Therefore,   $R_{h,v}$ is p-quasi-concave if and only if $R_{-h,v}$ is p-quasi-convex,
and all results on p-quasi-convexity immediately translate into results on p-quasi-concavity.  
% 
%  Before showing it, we denote by $\mathcal H^{\rm QCX}=\mathcal H^{\rm CX}\cup\mathcal H^{\rm -QSM}$
%where $\mathcal H^{\rm CX}=\{-h: h\in\mathcal H^{\rm CV}\}$ and $\mathcal H^{\rm -QSM}=\{-h: h\in\mathcal H^{\rm QSM}\}$. Specifically,
%%we need to introduce another two forms of $h$ in the subsets $\mathcal H^{\rm CX}$ and $\mathcal H^{\rm QCSM}$ of $\mathcal H^{\rm BV}$ that are important for the p-quasi-convexity property.
%\begin{itemize}
%\item[(i)] $h\in\mathcal H^{\rm CX}$: $h$ is convex.
%%\item[(ii)] $h\in\mathcal H_2^*$: For some $a,b\in \R$ such that either $a\ge 0$ or $b\ge0$, $h(p)=-a\id_{\{0<p\le 1\}}+b\id_{\{p=1\}}$, $p\in[0,1]$. In this case,
%%    \begin{align}
	%%    I_h=-a Q_{1}+b Q_{0}.
	%%    \end{align}
%\item[(ii)] $h\in\mathcal H^{\rm -QSM}$: For some $a\le 0$, $b\ge 0$ $\alpha\in[0,1]$, $c\in[0,1]$ and $k\in\R$, $h(p)=
%a\id_{\{0<p<\alpha\}}+(a+kc)\id_{\{p=\alpha\}}
%+(a+k)\id_{\{\alpha<p<1\}}+(a+b+k)\id_{\{p=1\}}$, $p\in[0,1]$. In this case,
%\begin{align}\label{eq-H2*}
%I_h=a Q_1+kQ_{1-\alpha}^{c}+b Q_0.
%\end{align}
%\end{itemize}
%
%A signed Choquet function with the form in \eqref{eq-H2*} is called a \emph{scaled quantile-converse spread mixture} as it is the sum of an asymmetric converse spread $aQ_1+bQ_0$ $(a\le 0,~b\ge 0)$ and a scaled mixed quantile
%$kQ_{1-\alpha}^{c}$ $(\alpha\in[0,1],~ c\in[0,1], ~k\in\R)$.
%%We denote by $\mathcal H^{\rm QCX}=\mathcal H^{\rm CX}\cup\mathcal H^{\rm QCSM}$.

\begin{corollary}\label{co-QCX}
	Suppose that  Assumption \ref{assm:1} holds. For $h\in\mathcal H^{\rm BV}$ and $v\in\mathcal V_{\M}$, $R_{h,v}$ is p-quasi-concave on $\mathcal M$ if and only if $h\in(-\mathcal H^{\rm QCX})$, that is, $h$ is concave or $-h\in\mathcal H^{\rm QSM}$.
	%either $h$ is concave, or $I_{-h}$ is a scaled quantile-spread mixture.
\end{corollary}

Note that p-quasi-linearity of a functional $\rho$ means that it is both p-quasi-convex and p-quasi-concave.
Combining Theorem \ref{th-main} and Corollary \ref{co-QCX}, a characterization of generalized rank-dependent functions with p-quasi-linearity is obtained.
\begin{corollary}\label{co-linearity}
	Suppose that  Assumption \ref{assm:1} holds. For $h\in\mathcal H^{\rm BV}$ and $v\in\mathcal V_{\M}$, $R_{h,v}$ is p-quasi-linear on $\mathcal M$ if and only if the signed Choquet function
	%$R_{h,v}(F)=I_h(F\circ v^{-1})$ where
	$I_h$ has one of the following forms.
	\begin{itemize}
		\item[(i)] $I_h=k\E$ for some $k\in\R$ where $\E$ represents the expectation.
		\item[(ii)] $I_h=k(cQ_{1}+(1-c)Q_{0})$ for some $k\in\R$ and $c\in[0,1]$.
		\item[(iii)] $I_h=kQ_{1-\alpha}^{c}$ for some $k\in\R$, $c\in[0,1]$ and $\alpha\in(0,1)$.
	\end{itemize}
\end{corollary}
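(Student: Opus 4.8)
The plan is to recognize m-quasi-linearity as the conjunction of m-quasi-convexity and m-quasi-concavity, and then to intersect the two characterizations already established. By Theorem \ref{th-main}, $I_h$ is m-quasi-convex iff $h\in\mathcal H^{\rm CX}\cup\mathcal H^{\rm QSM}$, and by Corollary \ref{co-QCX}, $I_h$ is m-quasi-concave iff $h$ is concave or $-h\in\mathcal H^{\rm QSM}$. Writing $\mathcal H^{\rm CV}$ for the set of concave distortion functions, m-quasi-linearity is thus equivalent to $h\in(\mathcal H^{\rm CX}\cup\mathcal H^{\rm QSM})\cap(\mathcal H^{\rm CV}\cup(-\mathcal H^{\rm QSM}))$. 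Distributing the intersection yields four cases to analyze: (A) $h$ convex and concave; (B) $h$ convex and $-h\in\mathcal H^{\rm QSM}$; (C) $h\in\mathcal H^{\rm QSM}$ and $h$ concave; and (D) $h\in\mathcal H^{\rm QSM}$ and $-h\in\mathcal H^{\rm QSM}$. The sufficiency (``if'') direction is quick: $k\E$ in (i) is m-linear; the mixed quantile $kQ_{1-\alpha}^c$ in (iii) is monotone with convex level sets, hence both m-quasi-convex and m-quasi-concave; and for (ii) one checks directly that both $h$ and $-h$ lie in $\mathcal H^{\rm QSM}$, so m-quasi-convexity follows from Theorem \ref{th-main} and m-quasi-concavity from Corollary \ref{co-QCX}.

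For necessity I would clear the three ``one-sided'' cases first. In case (A), a function that is both convex and concave is affine, and $h(0)=0$ forces $h(p)=kp$, i.e.\ $I_h=k\E$, which is form (i). In cases (B) and (C) one side is a member of $\pm\mathcal H^{\rm QSM}$ and hence a step function, so the convex (resp.\ concave) member is a convex (resp.\ concave) step function; being continuous on $(0,1)$ it is constant there, with jumps only at the endpoints, so $I_h=aQ_0-bQ_1=S_{a,b}$ for some $a,b\ge0$. Imposing the remaining requirement $-h\in\mathcal H^{\rm QSM}$ in (B) translates, through the boundary parametrizations $\alpha\in\{0,1\}$ of \eqref{eq-H2}, into the condition that one of $a,b$ vanish, leaving $I_h=aQ_0$ or $I_h=-bQ_1$; both are instances of (ii) (with $c=0$ or $c=1$). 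Case (C) reduces to (B) under the sign flip $h\mapsto-h$.

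The heart of the argument is case (D), where both $h$ and $-h$ are scaled quantile-spread mixtures. Here I would work from the explicit distortion form \eqref{eq:hQSM}. If the interior jump of $h$ lies at some $\alpha\in(0,1)$, then $h\in\mathcal H^{\rm QSM}$ gives $h(0+)\le0$ and an upward right-endpoint jump $h(1)-h(1-)\ge0$, while $-h\in\mathcal H^{\rm QSM}$ forces the reversed inequalities $h(0+)\ge0$ and $h(1)-h(1-)\le0$; hence $h(0+)=0$ and $h(1)=h(1-)$, the spread coefficients satisfy $a=b=0$, and $h$ collapses to the distortion of a scaled mixed quantile $kQ_{1-\alpha}^c$, which is form (iii). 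The only remaining possibility is that $h$ has no genuine interior jump, i.e.\ $h$ is constant on $(0,1)$, say $h\equiv m$ there with $h(1)=z$; requiring both $h$ and $-h$ to be QSM then forces $m$ and $z-m$ to have the same sign, giving $I_h=mQ_1+(z-m)Q_0$ with the two coefficients equally signed, i.e.\ form (ii). Collecting (A)--(D) produces exactly the list (i)--(iii).

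The main obstacle is the bookkeeping inside case (D), caused by the boundary values $\alpha\in\{0,1\}$ of the parametrization \eqref{eq-H2}: there the ``interior'' jump of a QSM merges with an endpoint, so a strictly positive $h(0+)$ or a downward right-endpoint jump becomes admissible, and the clean cancellation $a=b=0$ that drives the interior case no longer applies verbatim. The care lies in verifying that these degenerate QSM functions, intersected with their negatives, contribute precisely the equal-sign two-extreme-quantile combinations of form (ii)---and nothing else---so that the four cases tile the answer without gaps or double counting. The generic interior computation, by contrast, is routine once \eqref{eq:hQSM} is in hand.
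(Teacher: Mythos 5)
Your proposal is correct and takes essentially the same route as the paper: the paper's own proof likewise reduces m-quasi-linearity to $h\in\mathcal H^{\rm QCX}\cap(-\mathcal H^{\rm QCX})$ via Theorem \ref{th-main} and Corollary \ref{co-QCX} and then declares the resulting intersection ``straightforward to check.'' Your cases (A)--(D) simply supply the details the paper omits, and they are accurate, including the careful handling of the degenerate endpoint parametrizations $\alpha\in\{0,1\}$ in case (D).
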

\begin{proof}
	By Theorem \ref{th-main} and Corollary \ref{co-QCX}, we know that $R_{h,v}$ is p-quasi-linear if and only if $h\in\mathcal H^{\rm QCX}\cap\mathcal (-\mathcal H^{\rm QCX})$.
	It is straightforward to check that $R_{h,v}$  is one of the three forms in the corollary.
\end{proof}
Although p-quasi-linearity by definition is not related to the monotonicity of the distortion function,
all three forms of the distortion functions in Corollary \ref{co-linearity} are monotone. This is not surprising. Because p-quasi-linearity of $I_h$ implies quasi-linearity of $h$ (see Lemma \ref{prop:quasi-pi} and use a parallel result for the case of p-quasi-concavity), and a quasi-linear univariate function must be monotone.
\begin{remark}
	As a direct result of Corollary \ref{co-linearity}, only the three forms of $I_h$ in the corollary are possible to make $I_h$ p-quasi-linearity under Assumption \ref{assm:1}. This is not   new as \cite{WW20} showed a same result for $I_h$ with  CxLS (slightly weaker than p-quasi-linearity). Nevertheless, \cite{WW20} worked on a set containing all three-point distributions, a stronger condition than ours as the points are fixed in Assumption \ref{assm:1}.
\end{remark}

\section{A conflict between o-convexity and p-convexity}
\label{app:A}

In this section, we illustrate a conflict between o-convexity and p-convexity for constant-additive mappings; that is, a continuous and constant-additive mapping cannot be both o-convex and p-convex on $\X_c$ or $\M_c$ unless it is a multiple of the expectation. 
%Let $\X$ be the set of bounded random variables, and 
Recall that o-convexity of a mapping on $\M_c$ or $\X_c$ is defined as convexity on $\X_c$.
A mapping $\rho:\X_c\to \R$ is \emph{constant additive} if $\rho(X+c)=\rho(X)+\rho(c)$ for $X\in \X_c$ and $c\in \R$. 
On $\X_c$, continuity is with respect to the supremum-norm.
Continuous and constant-additive mappings on $\X_c$ include, but are not limited to, all signed Choquet functions and normalized monetary risk measures (\cite{FS16}).

\begin{proposition}\label{pr:conflict}
For a continuous and constant-additive mapping $\rho:\X_c \to \R$,
the following are equivalent:
 \begin{enumerate}[(i)]
 \item $\rho$ is o-convex and p-convex; 
 \item $\rho$ is o-concave and p-concave; 
 \item  $\rho=k\E $ for some $k\in\R$. 
 \end{enumerate}
 \end{proposition}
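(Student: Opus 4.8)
The plan is to prove the implications (iii)$\Rightarrow$(i), (iii)$\Rightarrow$(ii), (i)$\Rightarrow$(iii), and (ii)$\Rightarrow$(iii), where the two nontrivial implications both land on (iii). First I would record the elementary consequences of the hypotheses: taking $X=0$ in constant additivity gives $\rho(0)=0$, and feeding a constant into constant additivity shows that $c\mapsto\rho(c)$ is additive on $\R$; since $\rho$ is $\|\cdot\|_\infty$-continuous, this additive function is linear, so $\rho(c)=\kappa c$ with $\kappa:=\rho(1)$. The implications (iii)$\Rightarrow$(i) and (iii)$\Rightarrow$(ii) are immediate, because $X\mapsto c\,\E[X]$ is affine in the outcome and linear in the distribution, hence both o-convex/o-concave and m-convex/m-concave. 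Moreover, $-\rho$ is again continuous and constant additive, and it is o-convex and m-convex exactly when $\rho$ is o-concave and m-concave, so the implication (ii)$\Rightarrow$(iii) will follow from (i)$\Rightarrow$(iii) applied to $-\rho$ (yielding $-\rho=c'\E$, i.e. $\rho=-c'\E$). Thus the whole argument reduces to proving (i)$\Rightarrow$(iii).

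For (i)$\Rightarrow$(iii) the strategy is to squeeze $\rho(X)$ between $\kappa\,\E[X]$ from above and from below, using the two convexities, which pull in opposite directions. For the upper bound I would use m-convexity: writing the distribution of a simple random variable $X=\sum_i a_i\id_{A_i}$ (with $\P(A_i)=p_i$) as the finite mixture $\sum_i p_i\delta_{a_i}$ and iterating the mixture inequality gives $\rho(X)\le\sum_i p_i\rho(\delta_{a_i})=\sum_i p_i\kappa a_i=\kappa\,\E[X]$. Extending to arbitrary bounded $X$ is then a routine approximation: every bounded random variable is a uniform limit of simple ones, so sup-norm continuity of $\rho$ together with $\E[X_n]\to\E[X]$ promotes $\rho(X_n)\le\kappa\,\E[X_n]$ to $\rho(X)\le\kappa\,\E[X]$ for every $X\in\X$. (Here I use that m-convexity forces $\rho$ to be law-based, so that $\rho(\delta_a)=\rho(a)=\kappa a$ is unambiguous.)

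The crux is the matching lower bound, which is the step I expect to be the main obstacle, because o-convexity on its own only produces upper estimates on $\rho$. The resolution is a reflection trick: set $m=\E[X]$ and $Y=2m-X$, so that the outcome midpoint $\tfrac12 X+\tfrac12 Y\equiv m$ is the constant $m$. o-convexity at $\lambda=\tfrac12$ then gives $\kappa m=\rho(m)\le\tfrac12\rho(X)+\tfrac12\rho(Y)$, while the upper bound just established, applied to $Y$ (which also has mean $m$), gives $\rho(Y)\le\kappa m$; combining the two yields $\rho(X)\ge\kappa m=\kappa\,\E[X]$. Together with the upper bound this forces $\rho(X)=\kappa\,\E[X]$ for all $X\in\X$, i.e. (iii) holds with $c=\kappa=\rho(1)$, completing the proof.
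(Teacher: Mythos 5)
Your proof is correct, and it diverges from the paper's in the one step that actually requires an idea: the lower bound $\rho(X)\ge c\,\E[X]$. The paper obtains this by invoking an external representation theorem (Theorem 2.2 of Liu, Cai, Lemieux and Wang, 2020) to conclude that a law-based, continuous, o-convex mapping is convex-order monotone, and then applies that monotonicity to the pair consisting of the constant $\E[X]$ and $X$. Your reflection trick --- setting $Y=2\E[X]-X$ so that $\tfrac12 X+\tfrac12 Y$ is the constant $\E[X]$, applying o-convexity at $\lambda=\tfrac12$, and then controlling $\rho(Y)$ with the m-convexity upper bound already established (note $\E[Y]=\E[X]$ and $Y\in\X$) --- reaches the same inequality with no external input, so your argument is entirely self-contained and arguably more elementary; what the paper's route buys instead is a conceptual link to convex-order monotonicity, situating the proposition inside the broader representation theory of convex risk functionals. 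Everything else in your write-up --- the reduction of (ii) to (i) via $-\rho$, linearity of $\rho$ on constants from the Cauchy equation plus sup-norm continuity, the Jensen-type iteration of m-convexity over finite mixtures (which requires law-basedness, correctly noted as automatic for an m-convex mapping), and the uniform approximation of bounded random variables by simple ones --- matches the paper's proof in substance.
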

 \begin{proof}
Note that (i) and (ii) are symmetric, and (iii)$\Rightarrow$(i) is trivial. 
It suffices to show the direction (i)$\Rightarrow$(iii). 
 Since a p-convex mapping is necessarily law-based, we equivalently formula $\rho$ on $\M_c$. 
%  Let $c=\rho(0)=\widehat \rho(\delta_0)$, where
 Denote by $\delta_x$   the point-mass at $x\in \R$.  
Denote by $k=\rho(1)$. Note that    
  $\rho(x)=kx$ for $x\in \R$ since $\rho$ is constant additive and continuous.
Let $ F=\sum_{i=1}^n p_i \delta_{x_i}$ for some numbers $p_1,\dots,p_n\ge 0$ which add up to $1$ and $\{x_1,\dots,x_n\}\subseteq \R$. Let $X\sim F$. 
By p-convexity of $\rho$, we have 
$  \rho(F) \le \sum_{i=1}^n p_i   \rho(\delta_{x_i} ) = \sum_{i=1}^n p_i k  x_i   = k \E[X]  $. 
%Let $\tau:X\mapsto \rho(X)-\rho(0).$ 
 Since $\rho$ is  law-based, continuous and o-convex, it is convex-order monotone by e.g., the representation in \citet[Theorem 2.2]{LCLW20}.
Thus, $\rho(X)\ge \rho (\E[X]) = k\E[X]$ for all $X\in \X_c$. 
Putting the above two inequalities together, we have $\rho(X)=k\E[X] $ for all finitely supported random variables $X\in \X_c$.

For a general $X\in  \X_c$, 
let $X_n=   \lfloor  n X \rfloor /n$ for $n\in \N$, which is an approximation of $X$. It is clear that $X_n\to  X$ as $n\to \infty$, and $|X_n-X|\le 1/n$.
Using continuity of $\rho $ again,
 we have $\rho(X_n)\to \rho(X)$ as $n\to \infty$.
 Since $\rho(X_n) = k \E[X_n]   \to k \E[X]   $ as $n\to\infty$, we obtain   $\rho(X)=k \E[X]  $.  
 \end{proof}
 
 The same conclusion in Proposition \ref{pr:conflict} holds for  $\rho:L^p\to \R$ for $p\in [1,\infty)$ following the same proof.
 \begin{remark}Constant additivity of $\rho$ is essential for Proposition \ref{pr:conflict}.
A mapping $\rho: \X_c\to \R$ that is monotone, p-convex and o-convex does not need to be p-linear. 
For an example, take 
$
\rho_1:X\mapsto \E[f(X)]$ and $\rho_2:X\mapsto \E[g(X)]$ where $f$ and $g$ are two increasing  convex functions.
Clearly, $\rho_1$ and $\rho_2$ are both o-convex and p-linear. 
Since convexity is preserved under a maximum operation, the mapping $\rho:=\max \{\rho_1,\rho_2\}$ is o-convex and p-convex,
but it is not p-linear unless $f\ge g$ or $g\ge f$.
The reason that the proof does not work in this case is that, by letting $\ell(x)=\rho(x)$ for $x\in \R$, we can show using the argument above that 
$\ell(\E[X]) \le \rho(X)\le  \E[\ell(X)]$,
but this does not pin down $\rho$ unless $\ell$ is linear.
 \end{remark}

\section{Conclusion}\label{sec:7}

Probabilistic risk aversion (i.e., p-quasi-convexity) is characterized for rank-dependent utilities (Theorem \ref{th-mainRDU}) and generalized rank-dependent functions (Theorem \ref{th-main}). 
A new class of functionals, the mean-quantile mixtures, is shown to be the only class of  dual utilities that are p-quasi-convex and p-locally indifferent (Proposition \ref{th:mqm}).  
We have chosen to use p-convexity and o-concavity to present our main results, to be consistent with the literature on decision theory (e.g., \cite{Q93,W10}); by a simple sign change, we obtain corresponding results for p-concavity and o-convexity, a convention that is more common in the literature of risk management (e.g., \cite{MFE15, FS16}). 
Based on the characterization for generalized rank-dependent functions, we obtain a unified result of signed Choquet functions (Theorem \ref{th-grand}) containing  seven equivalent conditions for p-quasi-convexity. 
%Our results do not require strict increasing monotonicity. 
Our results are formulated for the more general objects, namely, signed Choquet functions and generalized rank-dependent functions. The corresponding results for dual utilities and rank-dependent utilities are also new, and our results help to understand classic decision models by disentangling monotonicity from other important properties.

\subsubsection*{Acknowledgments}  
The authors thank Mohammed Abdellaoui, David Budescu, Fabio Maccheroni,  Peter Wakker,  and Jingni Yang 
for their helpful discussions and bringing up relevant references.  
Ruodu Wang is supported by the Natural Sciences and Engineering Research Council of Canada (RGPIN-2024-03728) and Canada Research Chairs (CRC-2022-00141).

{
\small

}

\end{document}